\newcommand{\rood}[1]{\ignorespaces}
\newcommand{\comment}[1]{} 
\newcommand{\be}{\begin{eqnarray*}}
\newcommand{\ee}{\end{eqnarray*}}
\newcommand{\beq}{\begin{eqnarray}}
\newcommand{\eeq}{\end{eqnarray}}
\newcommand{\ket}[1]{\left|{#1}\right\rangle}
\newcommand{\bk}{\mathbf{k}}
\newcommand{\dd}{\mathrm{d}}
\newcommand{\mcs}{\mathcal{S}}
\newcommand{\cat}{\mathscr{C}}
\newtheorem{lemma}{Lemma}[section]
\newtheorem{theorem}{Theorem}[section]
\newtheorem{claim}{Claim}[section]
\begin{document}
\title{Periodic Table for Floquet Topological Insulators}
\author{Rahul Roy}
\author{Fenner Harper}
\affiliation{Department of Physics and Astronomy, University of California, Los Angeles, California USA}
\date{\today}
\begin{abstract}
Dynamical phases with novel topological properties are known to arise in driven systems of free fermions. In this paper, we obtain a `periodic table' to describe the phases of such time-dependent systems, generalizing the periodic table for static topological insulators. Using K-theory, we systematically classify Floquet topological insulators from the ten Altland-Zirnbauer symmetry classes across all dimensions. We find that the static classification scheme described by a group $G$ becomes $G\times G$ in the time-dependent case, and interpret the two factors as arising from the bipartite decomposition of the unitary time-evolution operator. Topologically protected edge modes may arise at the boundary between two Floquet systems, and we provide a mapping between the number of such edge modes and the topological invariant of the bulk.
\end{abstract}
\maketitle

\section{Introduction}

 The discovery of topological insulators and the theoretical and experimental activity that it inspired has led to major advances in our understanding of zero-temperature gapped phases~\cite{Hasan:2010ku,Qi:2011hb}.  While the first new systems to be discovered were specific topological phases of insulators and superconductors in one to three dimensions~\cite{Kane:2005hl,Moore:2007gq,Roy:2009kk,Fu:2007io,Roy:2008ty,Roy:2006uy,Schnyder:2008ez}, these were eventually arranged into a `periodic table', which extended the classification to all dimensions and symmetry classes~\cite{Kitaev:2009vc}. This unifying approach revealed a remarkable underlying periodicity, using connections between K-theory and Bott periodicity on the one hand, and free fermionic topological phases with symmetries on the other. 
 
The generalized topological insulators that this classification scheme describes exhibit robust, topologically protected edge modes in the presence of a boundary, and are characterized by invariant integers encoded in the topology of their wavefunctions. In this way, the periodic table captures the complete set of bulk-edge connections between bulk Hamiltonians and their protected edge states. Equivalently, one can interpret the periodic table as expressing the connection between the unitary time evolution of a constant Hamiltonian (evaluated after some time $T$), and the corresponding edge eigenstates. In this picture, the periodic table may be regarded as part of a more general framework of topological bulk-edge connections between unitary time evolution operators and protected edge modes.  When the Hamiltonians involved are no longer constrained to be time-independent, new types of bulk-edge connection may occur. In this paper, we seek to capture the structure of these dynamical bulk-edge connections by constructing a generalized periodic table for free fermionic systems with time-dependent Hamiltonians. 

Among our motivations for studying these systems is the set of (time-periodic) Floquet topological insulators that have recently been the subject of much experimental and theoretical effort [see Refs.~\onlinecite{Cayssol:2013gk,Bukov:2015gu} for a review]. Some of these efforts have considered using periodic driving to force a system into a topological state \cite{Yao:2007iy,Oka:2009kc,Inoue:2010iz,Lindner:2011ip,Kitagawa:2011fj,Lindner:2013gh,Grushin:2014gt,PerezPiskunow:2014iy,FoaTorres:2014gx,Calvo:2015bn,Iadecola:2015dg}, and significant experimental progress has be made in this direction in photonic systems \cite{Rechtsman:2013fe,Kitagawa:2012gl} and using ultracold atoms \cite{Jotzu:2015kz,JimenezGarcia:2015kd}. Floquet states (albeit non-topological ones) have also been observed in the solid state on the surfaces of topological insulators \cite{Wang:2013fe,Fregoso:2013di}. Other recent work has demonstrated the possibility of generating intrinsically dynamical topological phases that cannot be realized in static systems \cite{Kitagawa:2010bu,Jiang:2011cw,Rudner:2013bg,Thakurathi:2013dt,Asboth:2014bg,Titum:2015wj,Nathan:2015bi,Carpentier:2015dn,Titum:2015fl,Fruchart:2016hk}. 

Although we will make connections to Floquet theory, our approach provides a description of time-dependent topological phases in a manner that does not require time periodicity. Instead, we consider equivalence classes of unitary time-evolution operators in general, and focus on the instantaneous topological edge states that might exist in a system after a particular time evolution. Our main result will be the production of a generalized periodic table of Floquet topological insulators, which may be found in Table.~\ref{tab:AZ_class_2g}. In the process, we find many new Floquet topological phases that have not been considered before, and provide a general and unifying description for all symmetry classes and dimensions. As in the case of (static) topological insulators, this picture provides a connection between the manifestations of Bott periodicity in K-theory and the topological phases of driven free fermionic systems, describing both the strong and weak invariants of the system.

Some elements of our generalized periodic table have appeared in the context of Floquet systems elsewhere in the literature. Notably, previous work has considered dynamical topological phases in 1D chains with emergent Majorana fermions \cite{Jiang:2011cw,Thakurathi:2013dt}, 2D systems without symmetries \cite{Rudner:2013bg} and driven analogues of the 2D time-reversal invariant topological insulators \cite{Carpentier:2015dn}. Topological phases of 1D chiral lattices have also been described in Ref.~\onlinecite{Asboth:2014bg}, albeit using a different definition of chiral symmetry than will be used in this work. In addition, Ref.~\onlinecite{Nathan:2015bi} describes a band singularity approach to the characterisation of Floquet topological phases, introducing new results for 3D systems with time-reversal symmetry. After the completion of our work we discovered Ref.~\onlinecite{Fruchart:2016hk}, which extends the formulation of strong topological invariants for classes A and AIII to all dimensions. While our work does not rely on invariants for classification and discusses several other cases, our results seem to be consistent with these existing discussions and incorporates them into a general, unifying periodic table. Our results also capture the complete set of strong and weak topological invariants in each case.

In this paper we consider noninteracting systems, but the ideas we outline also develop some of the intuition required for the study of interacting topological phases, a topic that has been the focus of much recent study, both by the current authors and others \cite{Else:2016tj,Potter:2016tb,vonKeyserlingk:2016vq,Roy:2016wd}. Importantly, the statements we make in the noninteracting case can, to a great extent, be made mathematically precise, while arguments for interacting systems necessarily require a certain amount of conjecture. In this way, we hope that this paper will provide a useful corroboration of the ideas introduced in Ref.~\onlinecite{Roy:2016wd}.
 
The outline of this paper is as follows. In Sec.~\ref{sec:preliminary}, we introduce unitary evolution operators in the context of time-dependent systems, and establish the homotopy formalism that we will require throughout the text. In Sec.~\ref{sec:unitary_decomposition} we introduce unitary loops and explain how a general unitary evolution may be deformed into a unitary loop followed by a constant Hamiltonian evolution, a theorem that is central to our approach. We go on to classify unitary loops for the Altland-Zirnbauer (AZ) symmetry classes in Sec.~\ref{sec:loop_classification}, before relating this classification scheme to general unitaries and edge modes in Sec.~\ref{sec:discussion}. Finally, we give some concluding remarks in Sec.~\ref{sec:conclusions}. In order to aid ease of reading, we have omitted some of the more mathematical sections from the main text. These may be found in the appendices.


\section{Unitary Time Evolution Operators and their Properties\label{sec:preliminary}}
\subsection{Time-dependent Quantum Systems}
The aim of this paper is to classify the novel types of topological edge mode that can arise in a quantum system after it has evolved in time due to some time-dependent Hamiltonian $H(t)$. In general, instantaneous eigenstates satisfy the time-dependent Schr\"{o}dinger equation and evolve in time through the unitary transformation
\begin{equation}
\ket{\psi(t)}=\mathcal{T}\exp\left[-i\int_0^tH(t')\dd t'\right]\ket{\psi(0)}\equiv U(t)\ket{\psi(0)},\label{eq:unitary_definition}
\end{equation}
with $\mathcal{T}$ the time ordering operator. $U(t)$ is the time evolution operator, and, being unitary, has eigenvalues that lie on the unit circle in the complex plane. We write these eigenvalues as $e^{-i\epsilon(t)t}$, and focus on the instantaneous quasienergies given by $\{\epsilon(t)\}$, taken to lie in the range $-\pi/t<\epsilon(t)\leq \pi/t$. In a spatially periodic system, the instantaneous quasienergies form bands labelled by the momentum $\bk$ and a band index. In some ways, these bands bear a resemblance to the ordinary bands of a (static) periodic Hamiltonian, although we will find that the periodic nature of the quasienergy spectrum generally allows for a much richer structure.

We are particularly interested in the quasienergy spectrum after evolution through some time period $T$, and we write the quasienergies at $t=T$ simply as $\epsilon$. At this point, a system with an open boundary should have a similar quasienergy spectrum to the corresponding periodic system, with the possible addition of energy levels in the gaps between the bulk bands. The existence of gap states indicates the presence of topologically protected edge modes, which we aim to classify in this text.

Most previous work in this area has focussed on Floquet systems: those whose time-dependent Hamiltonians satisfy $H(t+T)=H(t)$ for some time period $T$. In a Floquet system, we can use an analogy of Bloch's theorem to write the instantaneous eigenstates as $\ket{\psi(t)}=e^{-i\epsilon(t) t}\ket{\phi(t)}$ with $\ket{\phi(t+T)}=\ket{\phi(t)}$. In this way, after a complete time period, Floquet states simply pick up a phase, since $U(T)\ket{\psi(0)}=e^{-i\epsilon T}\ket{\psi(0)}$. It should be noted, however, that the time evolution operator $U(t)$ is generally \emph{not} periodic, even if it is derived from a periodic Hamiltonian.

Although Floquet theory provides a useful setting in which to discuss time-dependent systems, we emphasize that our conclusions will be much more general than this. We will make statements about the protected edge modes present in the quasienergy spectrum after the unitary evolution $U(T)$; whether or not the system is periodic beyond this point in time is unimportant.

\subsection{Particle-hole, Time-reversal and Chiral Symmetries}
In this paper, we are concerned with free fermionic systems that fall within the symmetry classes of the AZ classification scheme \cite{Heinzner:2005je,Altland:1997zz,Zirnbauer:1996cp}. These classes are distinguished by the presence or absence of two antiunitary symmetries and one unitary symmetry, as well as the general form of the relevant symmetry operators.

\begin{table}[t!]
\be
\renewcommand\arraystretch{1}\begin{array}{c|c}
\mbox{Symmetry Operator} & \mbox{Cartan Label $\mathcal{S}$}\\
\hline
P=\sigma_1\otimes\mathbb{I} & \mbox{BDI, D, DIII} \\
P=i\sigma_2\otimes\mathbb{I} & \mbox{CII, C, CI} \\
\hline
\theta=\mathbb{I} & \mbox{AI, BDI, CI}\\
\theta=\mathbb{I}\otimes i\sigma_2 & \mbox{AII, CII, DIII}\\
\hline
C=\sigma_3\otimes\mathbb{I} & \mbox{AIII}
\end{array}
\ee
\caption{Standard expressions for symmetry operators within each Altland-Zirnbauer (AZ) symmetry class. $\sigma_i$ are the Pauli matrices and $\mathbb{I}$ is the identity. \label{tab:symmetry_operators}}
\end{table}

In systems with particle-hole symmetry (PHS), there exists a PHS operator $\mathcal{P}=\mathcal{K}P$, where $\mathcal{K}$ is the complex conjugation operator and $P$ is unitary, that acts on the band Hamiltonian to give
\beq
PH(\bk,t)P^{-1}&=&-H^*(-\bk,t).
\eeq
Similarly, in systems with time-reversal symmetry (TRS), there exists a TRS operator $\Theta=\mathcal{K}\theta$, where $\mathcal{K}$ is again the complex conjugation operator and $\theta$ is unitary, that acts on the band Hamiltonian to give
\beq
\theta H(\bk,t)\theta^{-1}&=&H^*(-\bk,T-t).
\eeq
With this definition, we have assumed without loss of generality that $t=T/2$ is the point in time about which the Hamiltonian is symmetric. 

From the definition of the time evolution operator in Eq.~\eqref{eq:unitary_definition}, it follows that these symmetry operators, if present, act on $U(\bk,t)$ to give
\beq
PU(\bk,t)P^{-1}&=&U^*(-\bk,t)\label{eq:unitary_PHS}\\
\theta U(\bk,t)\theta^{-1}&=&U^*(-\bk,T-t)U^{\dagger *}(-\bk,T).\label{eq:unitary_TRS}
\eeq
The actions of each symmetry operator on the time evolution unitary are derived in Appendix~\ref{sec:symmetry_actions}.

If both TRS and PHS are present, there is an additional unitary symmetry $C=P\theta$ that acts on the Hamiltonian to give
\beq
CH(\bk,t)C^{-1}&=&-H(\bk,T-t).\label{eq:hamiltonian_CS}
\eeq
More generally, there may be a chiral symmetry (CS) operator $C\neq P\theta$ that acts on the Hamiltonian according to Eq.~\eqref{eq:hamiltonian_CS} even in the absence of PHS and TRS. This defines the tenth AZ symmetry class, labelled AIII. When acting on the time evolution unitary, the CS operator gives (derived in Appendix~\ref{sec:symmetry_actions})
\beq
CU(\bk,t)C^{-1}&=&U(\bk,T-t)U^\dagger(\bk,T).\label{eq:unitary_CS}
\eeq
We note that our definition of chiral symmetry for periodic systems is slightly different from that used in some previous works \cite{Asboth:2014bg,Nathan:2015bi}.

After a suitable basis transformation, $P$, $\theta$ and $C$ can always be written in certain standard forms, as shown in Table~\ref{tab:symmetry_operators}. We will implicitly assume these representations throughout this article. We write the set of unitaries that belong to each symmetry class as $\mathcal{U}^\mathcal{S}$, where $\mathcal{S}$ is the appropriate Cartan label. To simplify notation, we set $T=1$ from now on, so that $t\in[0,1]$. We will also often omit the explicit momentum and time dependence of a unitary operator $U(\bk,t)$ if the meaning is clear.

\subsection{Gapped Unitaries}
We are interested in the protected edge modes that may arise in the gaps of the quasienergy spectrum at the end of a unitary evolution if the system has a boundary. For this reason, we will restrict the discussion to consider only gapped unitaries, which we define to be unitary evolutions of the form in Eq.~\eqref{eq:unitary_definition}, which at their end point, $U(\bk,1)$, have at least one value of quasienergy in the closed system that no bands cross.\footnote{More general definitions may be used, but this definition suffices to capture the interesting aspects of the problem.} Importantly, we do not require that the instantaneous quasienergy spectrum be gapped for intermediate values of $t$ ($0<t<1$). We write the set of all such gapped unitaries within the AZ symmetry class $\mathcal{S}$ as $\mathcal{U}^\mathcal{S}_g$ and note that a unitary evolution of this form may be represented as a continuous matrix function $U(\bk,t)$ with $0\leq t \leq 1$ and $\bk$ taking values within the $d$-dimensional Brillouin zone, which we call $X$. It is clear that $U(\bk,t)$ evolves from the identity matrix at $t=0$.

The gap structure at the end of a unitary evolution will depend on the symmetry of the underlying Hamiltonian, and in general can be rather complicated. A schematic example of a gapped unitary evolution with PHS is shown in Fig.~\ref{fig:gapped_unitary}, which emphasizes both the nontrivial evolution and the quasienergy band structure at the end point. The most commonly considered quasienergy gaps are those at $\epsilon=0$ and $\epsilon=\pi$, since in many cases a generic gap can be moved homotopically [a term we define precisely below] to one of these points. We will discuss gap structures more generally in Sec.~\ref{sec:discussion}.

\begin{figure*}[t]
\includegraphics[scale=0.5]{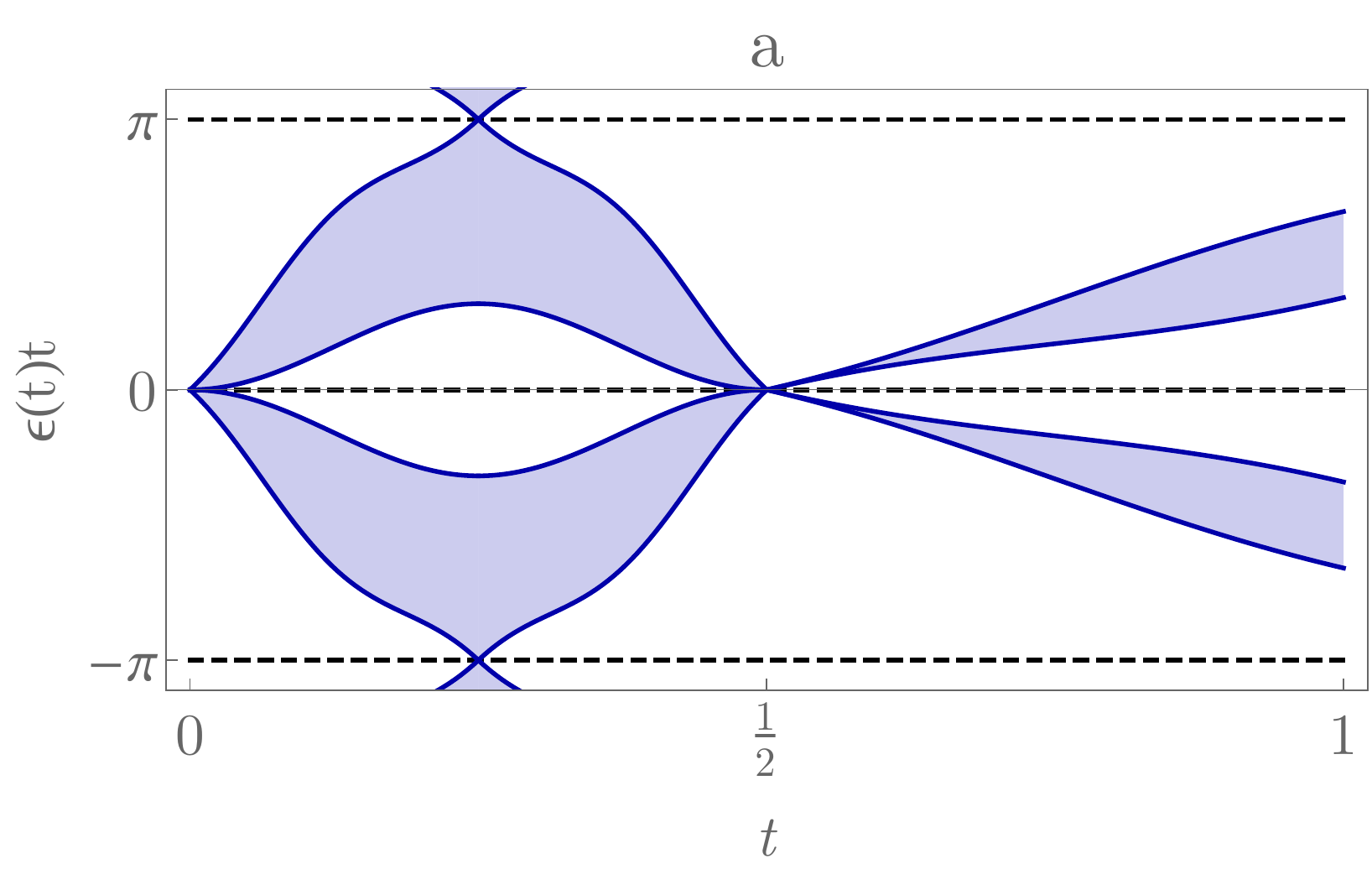}
\includegraphics[scale=0.5]{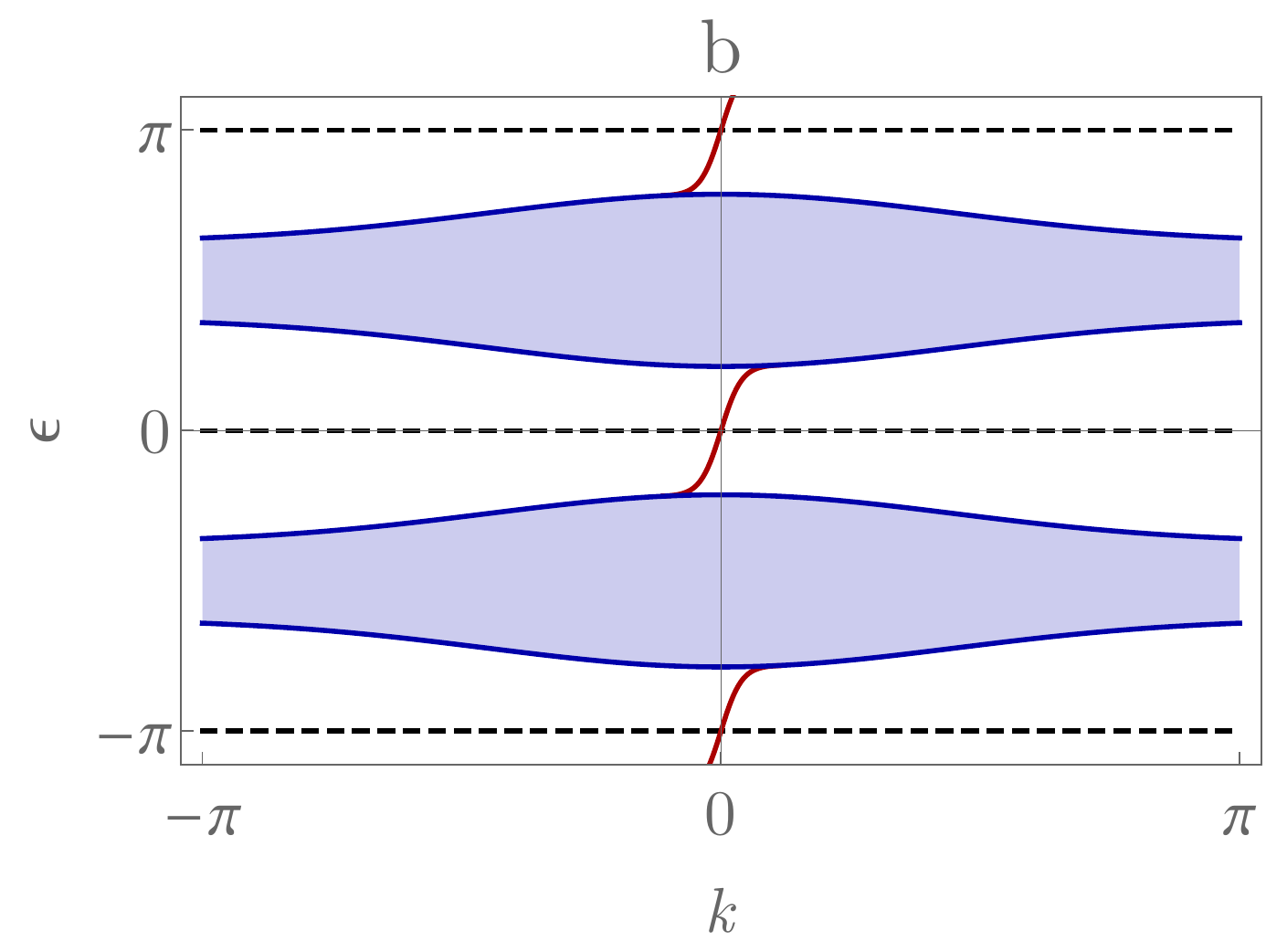}
\caption{(a) Unitary evolution as a composition of a loop with a constant Hamiltonian evolution. Instantaneous quasienergy bands are shown in blue. (b) End point of this unitary evolution, with quasienergy bands shown in blue and edge modes (which may be present in the open system) shown in red. The full spectrum has been projected onto a single momentum direction, labelled by $k$.\label{fig:gapped_unitary}}
\end{figure*}

The gapped spectrum in Fig.~\ref{fig:gapped_unitary}b resembles the band structure of a conventional, static Hamiltonian, with two well-separated bands and a gap at zero. In this situation, it is useful to define the effective Floquet Hamiltonian $H_F$ through
\beq
H_F(\bk)&=&i\ln\left[U(\bk,1)\right],\label{eq:FloquetHamDef}
\eeq
where the branch cut of the logarithm can be placed in the gap at $\epsilon=\pi$. According to Eq.~\eqref{eq:unitary_definition}, the Floquet Hamiltonian might naively be interpreted as the effective static Hamiltonian that, under time evolution, generates the quasienergy spectrum of the corresponding unitary, $U(\bk,1)$. If the Floquet Hamiltonian is topologically non-trivial, we might expect the edge modes associated with $H_F$ to transfer to edge modes in the quasienergy spectrum of $U(\bk,1)$. Indeed, if one considers evolution with a time-independent but topologically nontrivial Hamiltonian, then $U(\bk,1)$ for the open system will have robust edge modes if the corresponding unitary for the closed system is gapped at zero.

Although this intuition goes some way towards explaining the protected edge modes of unitary operators, the time-dependent situation is inherently more complicated: in general, there can be edge modes in the quasienergy gaps at both $\epsilon=0$ and $\epsilon=\pi$, the latter of which lie beyond a description in terms of the effective Floquet Hamiltonian. Indeed, recent studies have demonstrated systems that exhibit edge modes in both gaps even when the effective Floquet Hamiltonian is the identity operator (see, for example, Ref.~\onlinecite{Rudner:2013bg}). To fully characterize the edge modes of a unitary operator $U(\bk,1)$, we require information about the unitary evolution $U(\bk,t)$ throughout the period of evolution $0\leq t\leq 1$. In Fig.~\ref{fig:gapped_unitary}a we show a nontrivial evolution of this form that might generate edge modes in the gaps at $\epsilon=0$ and $\epsilon=\pi$. An interesting feature of Floquet systems with edge modes at $\epsilon=\pi$ is that the unitary for the open system cannot be written in the form $U(\bk,1)=e^{-iH(\bk)}$ for some local Hamiltonian $H(\bk)$. This contrasts with the closed system, whose unitary can always be written in this form.
\subsection{Compositions and Homotopy of Unitary Evolutions \label{sec:unitary_evolution_properties}}
Before outlining the classification scheme in detail, we describe a few properties of unitary evolutions that we will require below. We will need to consider compositions of unitaries, and so, borrowing notation from the composition of paths \cite{nakahara2003geometry}, we write the evolution due to $U_1$ followed by the evolution due to $U_2$ as $U_1* U_2$. If $H_1(\bk,t)$ is the Hamiltonian corresponding to $U_1$ and $H_2(\bk,t)$ is the Hamiltonian corresponding to $U_2$, the Hamiltonian corresponding to the composition $U_1* U_2$ is given by
\beq
H(t)&=&\left\{\renewcommand\arraystretch{1.2}\begin{array}{ccc}
H_1(\bk,2t) && 0\leq t\leq 1/2\\
H_2(\bk,2t-1) && 1/2\leq t\leq 1
\end{array}\right..\label{eq:composition}
\eeq
With this definition, the endpoint of any composition of unitaries always occurs at $t=1$.

In general, this composition rule produces an evolution that is no longer time-reversal symmetric, even if $H_1$ and $H_2$ individually \emph{are} time-reversal symmetric. If we wish to consider systems with TRS, we should instead define the Hamiltonian corresponding to the composition $U_1* U_2$ by
\beq
H(t)&=&\left\{\renewcommand\arraystretch{1.2}\begin{array}{ccc}
H_2(\bk,2t) && 0\leq t\leq 1/4\\
H_1(\bk,2t-1/2) && 1/4\leq t\leq 3/4\\
H_2(\bk,2t-1) && 3/4\leq t\leq 1
\end{array}\right.,\label{eq:composition_TRS}
\eeq
which we see has the required symmetry.

For classification purposes, we split the set $\mathcal{U}^\mathcal{S}_g$ into equivalence classes. Following the classification scheme of static topological insulators in Ref.~\onlinecite{Kitaev:2009vc}, we carry out this partition using homotopy. We define homotopy in the usual way, and say that two unitary operators $U_1,U_2\in\mathcal{U}^\mathcal{S}_g$ are homotopic if and only if there exists a function $h(s)$, with $s\in\left[0,1\right]$, such that
\beq
h(0)=U_1,&~~~~~~&h(1)=U_2,
\eeq
with $h(s)\in\mathcal{U}^\mathcal{S}_g$ for all intermediate values of $s$. In this way, the gap structure at $t=1$ (but only the gap structure at this point) must be preserved throughout the homotopy. We write homotopy equivalence as $U_1\approx U_2$.

In order to compare unitaries with different numbers of bands, we introduce the further equivalence relation of stable homotopy as follows. We define $U_1\sim U_2$ if and only if there exist two trivial unitaries, $U_{n_1}^0$ and $U_{n_2}^0$, such that
\beq
U_1\oplus U^0_{n_1}&\approx& U_2\oplus U_{n_2}^0,
\eeq
where $\oplus$ is the direct sum and $n_1,n_2$ are positive integers that give the number of bands in the trivial unitary. The appropriate trivial unitaries $U^0_n$ must belong to the set $\mathcal{U}^\mathcal{S}_{g}$, and will be given explicitly when required.

Finally, since we are ultimately interested in the behavior at a system boundary, the discussion is simplified considerably if we also define equivalence classes of pairs of unitaries. The pairs $(U_1,U_2)$ and $(U_3,U_4)$, where both members of each pair have the same number of bands, are stably homotopic if and only if
\beq
U_1\oplus U_4&\sim&U_2\oplus U_3.
\eeq
We write this equivalence as $(U_1,U_2)\sim(U_3,U_4)$.

\section{Decomposition of Unitary Evolutions\label{sec:unitary_decomposition}}
Our approach will be to isolate the new, dynamical topological behavior from the static topological behavior that is encoded in a nontrivial Floquet Hamiltonian. We will initially restrict the discussion to unitaries that have gaps at both $\epsilon=0$ and $\epsilon=\pi$ (with possible additional gaps elsewhere in the spectrum), considering more general cases in Sec.~\ref{sec:discussion}. We write the set of such unitaries as $\mathcal{U}^\mathcal{S}_{0,\pi}$. 

To proceed, it is useful to define two special types of unitary evolution. First, we define a unitary loop to be a unitary evolution that satisfies $U(\bk,0)=U(\bk,1)=\mathbb{I}$. A unitary of this form can be seen to act trivially on a closed system, but may generate nontrivial edge modes in a system with a boundary. Secondly, we define a constant Hamiltonian evolution as a unitary evolution that may be expressed as $U(\bk,t)=e^{-iH(\bk)t}$ for some static Hamiltonian $H(\bk)$, whose eigenvalues have magnitude strictly less than $\pi$. The utility of  identifying these two types of unitary evolutions becomes apparent when one considers the following theorem:
\begin{theorem}\label{thm:decomposition}
Every unitary $U\in\mathcal{U}^\mathcal{S}_{0,\pi}$ can be continuously deformed to a composition of a unitary loop $L$ and a constant Hamiltonian evolution $C$, which we write as $U\approx L*C$. $L$ and $C$ are unique up to homotopy.
\end{theorem}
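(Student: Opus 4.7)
My plan is to make the decomposition explicit and then verify homotopy equivalence, symmetry preservation, and uniqueness separately. Since $U(\bk,1)$ has a gap at $\epsilon=\pi$ by assumption, the branch cut of the logarithm can be placed in that gap and the Floquet Hamiltonian $H_F(\bk)=i\ln U(\bk,1)$ is a continuous Hermitian matrix with spectrum strictly inside $(-\pi,\pi)$. This makes $C(\bk,t)\equiv e^{-iH_F(\bk)t}$ a valid constant Hamiltonian evolution with $C(\bk,1)=U(\bk,1)$, and setting $L(\bk,t)\equiv U(\bk,t)\,C(\bk,t)^{-1}$ yields a unitary loop since $L(\bk,0)=L(\bk,1)=\mathbb{I}$.

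To establish $U\approx L*C$, I would exhibit an explicit interpolation between the pointwise product $U=L\cdot C$ and the composition $L*C$ defined in Eq.~\eqref{eq:composition}. The natural family is
\begin{equation*}
h_s(\bk,t) = L\bigl(\bk,\min(t/(1-s/2),1)\bigr)\cdot C\bigl(\bk,\max(0,(t-s/2)/(1-s/2))\bigr),
\end{equation*}
which at $s=0$ reproduces $U$ and at $s=1$ reproduces $L*C$, while the endpoint $h_s(\bk,1)=U(\bk,1)$ is held fixed so the gap structure at $t=1$ is never violated. Each $h_s$ is itself generated by a continuous, time-dependent Hamiltonian obtained by rescaling and concatenating the generators of $L$ and $C$.

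The principal obstacle is verifying that every step respects the AZ symmetry. For PHS and the chiral symmetry of class AIII the symmetry acts pointwise in $t$, so $H_F$ inherits it directly from $U(\bk,1)$ via Eq.~\eqref{eq:unitary_PHS}, making $C$ symmetric and $L=U\,C^{-1}$ symmetric as a product of symmetric objects. For TRS and the chiral symmetry descended from $P\theta$, the symmetry involves $t\mapsto T-t$ and the naive composition of Eq.~\eqref{eq:composition} breaks it, which is precisely why the symmetric composition rule of Eq.~\eqref{eq:composition_TRS} must be used. In these classes I would modify $h_s$ into a time-symmetric interpolation that places the loop $L$ around $t=1/2$ and splits $C$ between the two endpoints, after first using Eq.~\eqref{eq:unitary_TRS} to deform $L$ into a loop satisfying the appropriate time-reversal condition.

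Uniqueness follows from the rigidity of the logarithm. If $U\approx L_1*C_1\approx L_2*C_2$, then restricting the homotopy to $t=1$ gives a homotopy between $e^{-iH_1(\bk)}$ and $e^{-iH_2(\bk)}$ through unitaries gapped at $\epsilon=\pi$; taking logarithms pointwise in that gap produces a homotopy $H_1\approx H_2$ through Hamiltonians with spectrum in $(-\pi,\pi)$, and the straight-line interpolation $(1-s)H_1+sH_2$, whose operator norm is bounded by convexity, then gives $C_1\approx C_2$. Once $C$ is pinned up to homotopy, $L\approx U\cdot C^{-1}$ is also determined up to homotopy of loops, completing the argument.
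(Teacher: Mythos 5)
Your existence construction is sound for the classes whose symmetries act pointwise in time, and it is close in spirit to the paper's: the paper deforms $U$ through $h(s)=[U*C_-(s)]*C_+(s)$, switching on mutually cancelling constant evolutions generated by $\mp sH_F$, whereas you factor $U(\bk,t)=L(\bk,t)\,C(\bk,t)$ pointwise and reparametrize time to reach $L*C$; both keep the $t=1$ endpoint fixed so the gap structure is never violated. However, your symmetry bookkeeping contains an error: under the paper's definitions the chiral symmetry of class AIII is \emph{not} pointwise in $t$ --- Eq.~\eqref{eq:hamiltonian_CS} sends $t\mapsto T-t$, and on unitaries Eq.~\eqref{eq:unitary_CS} gives $CU(\bk,t)C^{-1}=U(\bk,T-t)U^\dagger(\bk,T)$ --- so AIII belongs with the time-reflected classes, and it is precisely for those classes (AI, AII, AIII, BDI, CII, DIII, CI) that your interpolation is only sketched (``a time-symmetric interpolation that splits $C$ between the endpoints''). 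That is where the real work lies; the paper sidesteps it by phrasing the entire homotopy in terms of the composition $*$, defined by Eq.~\eqref{eq:composition_TRS} in those classes, so that the symmetry is preserved automatically at every stage. (Incidentally, your preliminary step of ``deforming $L$ into a loop satisfying the time-reversal condition'' is unnecessary: a short computation using Eq.~\eqref{eq:unitary_TRS} together with $\theta H_F(\bk)\theta^{-1}=H_F^*(-\bk)$ shows that $L=U\,C^{-1}$ already satisfies it.)

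The uniqueness argument has a genuine flaw: the straight-line interpolation $(1-s)H_1+sH_2$ does not in general remain gapped at zero --- if it did, any two gapped Floquet Hamiltonians would be homotopic and the constant factor would carry no topology at all, contradicting the classification the theorem supports; convexity only controls the gap at $\epsilon=\pi$. What rescues you is the clause immediately before it: restricting the homotopy $h(s)$ to $t=1$ and taking logarithms in the gap at $\pi$ already produces a family $H(s)$ of gapped Hamiltonians joining $H_1$ to $H_2$, and $e^{-iH(s)t}$ is then the desired homotopy $C_1\approx C_2$; this is exactly the paper's argument, so the straight-line step should simply be deleted. Finally, ``once $C$ is pinned, $L\approx U\cdot C^{-1}$ is determined up to homotopy of loops'' is asserted rather than shown: one must exhibit a homotopy \emph{through loops}, e.g.\ by multiplying (or composing) $h(s)$ with the constant evolution generated by $-H(s)$, whose endpoint $h(s)(\bk,1)\,e^{iH(s)(\bk)}=\mathbb{I}$ shows it is a loop for every $s$ and interpolates, up to reparametrization, between $L_1$ and $L_2$ --- again using the symmetric composition in the time-reflected classes. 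This is the content of the paper's Lemma~\ref{lemma:loop_constant_homotopy}.
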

Theorem~\ref{thm:decomposition} is proved in Appendix~\ref{sec:decomposition_proof} and is illustrated schematically in Fig.~\ref{fig:gapped_unitary}a. By a slight abuse of terminology, we will often call the unitary composed of the loop and the constant the `decomposition' of the original unitary. 
Heuristically, the decomposition can be interpreted as an initial loop, which may generate nontrivial edge modes at $\epsilon=\pi$, followed by a constant evolution by the static Floquet Hamiltonian $H_F$. Since we have assumed there is a spectral gap at $\epsilon=\pi$, the branch cut required for the definition in Eq.~\eqref{eq:FloquetHamDef} can be placed in this region, and the final quasienergy bands can be consistently thought of as emanating from the point $\epsilon=0$. In addition, since we are assuming that the complete unitary evolution is gapped at both $\epsilon=0$ and $\epsilon=\pi$, the static Hamiltonian required for the constant evolution must be gapped at zero. We write the set of unitary loops in symmetry class $\mathcal{S}$ as $\mathcal{U}^\mathcal{S}_L$ and the set of constant, gapped Floquet Hamiltonian evolutions in symmetry class $\mathcal{S}$ as $\mathcal{U}^\mathcal{S}_C$.

Through this unique decomposition, we see that a general unitary evolution from $\mathcal{U}_{0,\pi}$ can be classified by separately considering the unitary loop component and the constant evolution component. A specific phase may be labelled by the pair $(n_L, n_C)$, where $n_L$ and $n_C$ are invariant integers associated with the unitary loop and constant evolution components, respectively. 

Next, we label the set of static gapped Hamiltonians in symmetry class $\mathcal{S}$, whose eigenvalues $E$ satisfy $0<|E|<\pi$, by $\mathcal{H}^\mcs$. The set of gapped Floquet Hamiltonian evolutions in $\mathcal{U}^\mathcal{S}_C$ is clearly in one-to-one correspondence with the set of static Hamiltonians in $\mathcal{H}^\mcs$. This follows from the bijection $C(t)=\exp\left(-iH_Ft\right)$, where $H_F$ is the unique Floquet Hamiltonian with eigenvalue magnitude strictly between $0$ and $\pi$. From the definition of homotopy given in Sec.~\ref{sec:unitary_evolution_properties}, we see that $C_1\approx C_2$ within $\mathcal{U}^\mathcal{S}_C$ if and only if $H_1\approx H_2$ within $\mathcal{H}^\mcs$, where $H_i$ is the static Hamiltonian corresponding to $C_i$. In addition, it follows that $C_1\sim C_2$ if and only if $H_1\sim H_2$, if we write the trivial unitary as
\beq
U^0_n(\bk,t)=\exp\left(-iH_n^0\right),
\eeq
where $H^0_n$ is a suitable trivial Hamiltonian.

In this way, we can classify pairs of constant Hamiltonian evolutions $(C_1,C_2)$ by instead classifying pairs of static Hamiltonians $(H_1,H_2)$. This is a relative classification that is equivalent to the well-known classification of static topological insulators, which is summarized in the periodic table given in Ref.~\onlinecite{Kitaev:2009vc}. The classification of pairs of unitary loops $(L_1,L_2)$ does not, however, have a static analogue.

Through this decomposition, the periodic table of static topological insulators may be viewed as a subset of a larger classification scheme that also includes time-dependent Hamiltonians. In this picture, static topological insulators correspond to compositions of nontrivial constant Hamiltonian evolutions with trivial unitary loops. More general dynamical topological phases arise through compositions of constant evolutions with \emph{nontrivial} unitary loops. In the next section of this paper we set out to classify the nontrivial unitary loops that may exist in each symmetry class.

\section{Classification of Unitary Loops\label{sec:loop_classification}}
\subsection{Unitary Loops with Particle-Hole Symmetry Only\label{sec:PHS_classification}}

With the machinery defined in previous sections, we are now ready to give a systematic discussion of the classification of unitary loops. We begin by considering loops in systems that have PHS but no other symmetry, belonging to the set $\mathcal{U}^\mathcal{S}_L$ with $\mathcal{S}\in\{\rm C,D\}$.
\subsubsection{Hermitian Maps}
To proceed, we define a Hermitian map corresponding to a given unitary $U(\bk,t)$ through
\beq
H_U(\bk,t)&=&\left(\begin{array}{cc}
0 & U(\bk,t)\\
U^\dagger(\bk,t) & 0
\end{array}\right),\label{eq:hermitian_map_definition}
\eeq
which we see satisfies $H_U^2=\mathbb{I}$. In addition, we define the two new symmetry operations
\beq
P_{1}=\left(\begin{array}{cc}
P & 0\\
0 & P
\end{array}\right),&~~~~~~&P_{2}=\left(\begin{array}{cc}
P & 0\\
0 & -P
\end{array}\right),
\eeq
which are derived from the standard PHS operator $P$. Using Eq.~\eqref{eq:unitary_PHS}, we see that the Hermitian map satisfies the following new symmetry relations:
\beq
P_{1}H_U(\bk,t)P_{1}^{-1}&=&H_U^*(-\bk,t)\nonumber\\
P_{2}H_U(\bk,t)P_{2}^{-1}&=&-H_U^*(-\bk,t).
\eeq
We write the set of Hermitian maps that square to the identity, satisfy these symmetries and which additionally satisfy $H_U(\bk,0)=H_U(\bk,1)$ (but which are not necessarily derived from unitary loops) as $\mathscr{H}^\mathcal{S}$. We write the subset of $\mathscr{H}^\mathcal{S}$ that corresponds specifically to unitary loops as $\mathscr{H}^\mathcal{S}_L$, and note that from the properties of unitary loops, members of $\mathscr{H}^\mathcal{S}_L$ must satisfy
\beq
H_U(\bk,0)=H_U(\bk,1)&=&\left(\begin{array}{cc}
0 & \mathbb{I}_{n}\\
\mathbb{I}_{n} & 0
\end{array}\right).
\eeq
There is a one-to-one mapping between a unitary loop $U\in\mathcal{U}^\mathcal{S}_L$ and the corresponding Hermitian map $H_U\in\mathscr{H}^\mathcal{S}_L$, a statement that is proved in Appendix~\ref{sec:121_proof}.

It is easy to verify that $U_1\approx U_2$ if and only if $H_{U_1}\approx H_{U_2}$, which extends the definition of homotopy equivalence to $\mathscr{H}^\mathcal{S}$. Next, we note that the trivial unitary loop is given by
\beq
U_n^0(\bk,t)&=&\mathbb{I}_n,
\eeq
and the corresponding trivial matrix in $\mathscr{H}^\mathcal{S}_L$ is given by
\beq
H_{U,n}^0(\bk,t)&=&\left(\begin{array}{cc}
0 & \mathbb{I}_{n}\\
\mathbb{I}_{n} & 0
\end{array}\right).
\eeq
This allows us to define the stable homotopy equivalence of Hermitian maps through
\beq
H_A\sim H_B &~~\Leftrightarrow~~& H_A\oplus H_{U,n_1}^0\approx H_B\oplus H_{U,n_2}^0,
\eeq
in the space $\mathscr{H}^\mathcal{S}_L$. Again, it is clear that $U_1\sim U_2\Leftrightarrow H_{U_1}\sim H_{U_2}$. 

As in the case of unitaries, we can also consider pairs of Hermitian maps, $(H_{U_1},H_{U_2})$, where both members of each pair have the same number of bands. This allows us to define the equivalence relation $(H_{U_1},H_{U_2})\sim(H_{U_3},H_{U_4})$ if and only if $H_{U_1}\oplus H_{U_4}\sim H_{U_3}\oplus H_{U_2}$. These pairs of Hermitian maps form an additive group, described in Appendix~\ref{app:KTheory}, which we can use to classify the relative topological invariants of the corresponding pair of unitary evolutions.
\subsubsection{Classification of Unitaries using K-Theory}
We will omit the technical steps of the K-theory argument in this section, and instead give an overview of the method. For further details, we refer the reader to Appendix~\ref{app:KTheory} and references therein.

The general idea is to use the Morita equivalence of categories to map the group of equivalence classes of pairs in $\mathscr{H}^\mathcal{S}$ onto a K-group of the kind $KR^{0,q}(M)$ (or, later, $K(M)$ for classes A and AIII). The $KR^{0,q}(M)$ are a set of well-studied K-groups of manifolds which are described, for example, in Refs.~\onlinecite{Atiyah:112984,Karoubi:2023223,Atiyah:1966hg}. In these expressions, $M$ is the manifold $S^{1} \times X$, where $X$ is the Brillouin zone and $S^{1}$ is the circle corresponding to the time direction, whose initial and final points ($t=0$ and $t=1$) are identified due to the assumed periodicity of Hermitian maps in $\mathscr{H}^\mathcal{S}$. The space $M$ is, in the terminology of Ref.~\onlinecite{Atiyah:1966hg}, a \textit{real space}, i.e. a space with an involution corresponding to $\bk \rightarrow - \bk$.
The reduction using Morita equivalence relations is equivalent to Kitaev's trick of replacing negative generators with positive generators \cite{Kitaev:2009vc}. 

For class D, the resulting group of the equivalence classes of pairs is $KR^{0,1}(S^1\times X)$, while for class C the resulting group is $KR^{0,5}(S^1\times X)$. Specifically restricting ourselves to the subset $\mathscr{H}^\mathcal{S}_L$, the group of equivalence classes of pairs of loops is then isomorphic to the relative K-group
\begin{widetext}
\begin{equation}
\begin{array}{ccc}
KR^{0,1}(S^{1} \times X,\{0\}\times X)&=&KR^{0,2}(X)\mbox{~~~~~~Class D}\\
KR^{0,5}(S^{1} \times X,\{0\}\times X)&=&KR^{0,6}(X)\mbox{~~~~~~Class C},
\end{array}\label{eq:PHS_kgroups}
\end{equation}
\end{widetext}
where the point $\{0\} \in S^{1}$ corresponds to the initial time of the evolution. The equalities in these two equations  are well-known K-theory isomorphisms \cite{Karoubi:2023223,Atiyah:1966hg}. The last results are identical to the K-groups classifying static topological insulators from these classes, and we note that the K-group captures both the strong and weak invariants.

\subsection{Unitary Loops with Time-reversal Symmetry}
We now discuss the classification of unitaries that have TRS, and which may also have PHS. These correspond to the symmetry classes AI and AII (TRS only), and classes BDI, CII, DIII and CI (TRS and PHS). 

Although it is possible to work with the unitary operators directly, the calculations become considerably simpler if we instead define symmetrized unitaries, $U_S(\bk,t)$, through 
\beq
U_S(\bk,t)&=&\exp\left[-i\int^{\frac{1+t}{2}}_{\frac{1-t}{2}}H(\bk,t')\,\dd t'\right].\label{eq:modifiedU_def}
\eeq
It is clear that there is a one-to-one correspondence between unitary operators $U(\bk,t)$ and their symmetrized forms $U_S(\bk,t)$, and further, that both expressions agree at $t=0$ and $t=1$. Under a particle-hole transformation, a symmetrized unitary with PHS satisfies the same relation as the original unitary,
\beq
PU_S(\bk,t)P^{-1}&=&U^{*}_S(-\bk,t),
\eeq
while under time-reversal, the symmetrized unitary operator transforms as
\beq
\theta U_S(\bk,t)\theta^{-1}=U^{\dagger*}_S(-\bk,t),\label{eq:modifiedU_TRS}
\eeq
relations that are derived in Appendix~\ref{sec:modified_symmetry_actions}. For the rest of this section we will drop the subscript $S$ and assume that we are using symmetrized unitaries.

As in the previous section, a (symmetrized) unitary evolution that belongs to $\mathcal{U}^\mathcal{S}_{0,\pi}$ is equivalent to a composition of a unitary loop with a constant Hamiltonian evolution. However, since the unitaries involved now have TRS, composition is defined using the time-reversal symmetric expression in Eq.~\eqref{eq:composition_TRS}. The classification of the constant Hamiltonian evolution component follows the discussion in Sec.~\ref{sec:unitary_decomposition}, with topological edge modes at $\epsilon=\pi$, if present, arising from the loop component.

\subsubsection{Hermitian Maps\label{sec:loops_PHS_TRS}}
To classify the unitary loops in these classes, we again define a Hermitian map corresponding to a given (symmetrized) unitary $U(\bk,t)$ as in Eq.~\eqref{eq:hermitian_map_definition}.
This time, we require up to four symmetry operators,
\beq
P_{1}=\left(\begin{array}{cc}
P & 0\\
0 & P
\end{array}\right),&~~~~~~&P_{2}=\left(\begin{array}{cc}
P & 0\\
0 & -P
\end{array}\right),\nonumber\\
\theta_{1}=\left(\begin{array}{cc}
0 & \theta\\
\theta & 0
\end{array}\right),&~~~~~~&\theta_{2}=\left(\begin{array}{cc}
0 & \theta\\
-\theta & 0
\end{array}\right),
\eeq
which are derived from the symmetry operators $P$ and $\theta$. If the relevant symmetry is present, these operators act on the Hermitian map $H_U$ to give
\beq
P_{1}H_U(\bk,t)P_{1}^{-1}&=&H_U^*(-\bk,t)\nonumber\\
P_{2}H_U(\bk,t)P_{2}^{-1}&=&-H_U^*(-\bk,t)
\eeq
for classes BDI, CII, DIII and CI, and 
\beq
\theta_{1}H_U(\bk,t)\theta_{1}^{-1}&=&H_U^*(-\bk,t)\nonumber\\
\theta_{2}H_U(\bk,t)\theta_{2}^{-1}&=&-H_U^*(-\bk,t)
\eeq
for classes AI, AII, BDI, CII, DIII and CI.

As before, we write the set of Hermitian maps that square to the identity, satisfy these symmetries, and which additionally satisfy $H_U(\bk,0)=H_U(\bk,1)$, as $\mathscr{H}^\mathcal{S}$, and write the subset of this that corresponds to unitary loops as $\mathscr{H}^\mathcal{S}_L$. There is again a one-to-one mapping between the set of $U\in\mathcal{U}^\mathcal{S}_L$ and the corresponding set of Hermitian maps $H_U\in\mathscr{H}^\mathcal{S}_L$, a statement that can be proved using a method similar to that given in Appendix~\ref{sec:121_proof}. As in Sec.~\ref{sec:PHS_classification}, homotopy, stable homotopy, and the equivalence of pairs can be defined for Hermitian maps in $\mathscr{H}^\mathcal{S}$.
\subsubsection{Classification of Unitaries using K-Theory}
We can now use K-theory arguments to map the equivalence classes of pairs in $\mathscr{H}^\mathcal{S}$ onto K-groups. Using the arguments of Sec.~\ref{sec:PHS_classification} for each symmetry class, we find the group of equivalence classes in each case maps onto
\begin{widetext}
\beq
\begin{array}{ccccc}
KR^{0,7}(S^1\times X)&~~~~~~&\mbox{Class AI}\\
KR^{0,3}(S^1\times X)&~~~~~~&\mbox{Class AII}\\
KR^{0,0}(S^1\times X)&~~~~~~&\mbox{Class BDI}\\
KR^{0,4}(S^1\times X)&~~~~~~&\mbox{Class CII}\\
KR^{0,2}(S^1\times X)&~~~~~~&\mbox{Class DIII}\\
KR^{0,6}(S^1\times X)&~~~~~~&\mbox{Class CI}.
\end{array}
\eeq
Restricting to the subset $\mathscr{H}^\mathcal{S}_L$, the groups are then isomorphic to the relative K-groups
\beq
\begin{array}{ccccc}
KR^{0,7}(S^1\times X,\{0\}\times X)&=&KR^{0,0}(X)\mbox{~~~~~~Class AI}\\
KR^{0,3}(S^1\times X,\{0\}\times X)&=&KR^{0,4}(X)\mbox{~~~~~~Class AII}\\
KR^{0,0}(S^1\times X,\{0\}\times X)&=&KR^{0,1}(X)\mbox{~~~~~~Class BDI}\\
KR^{0,4}(S^1\times X,\{0\}\times X)&=&KR^{0,5}(X)\mbox{~~~~~~Class CII}\\
KR^{0,2}(S^1\times X,\{0\}\times X)&=&KR^{0,3}(X)\mbox{~~~~~~Class DIII}\\
KR^{0,6}(S^1\times X,\{0\}\times X)&=&KR^{0,7}(X)\mbox{~~~~~~Class CI},
\end{array}\label{eq:PHS_TRS_kgroups}
\eeq
\end{widetext}
using a set of well-known K-theory isomorphisms as outlined in Appendix E~\cite{Karoubi:2023223,Atiyah:1966hg}. The last results are identical to the K-groups classifying static topological insulators from these classes and describe the complete set of strong and weak invariants. Overall, it follows that pairs of unitary loops within $\mathcal{U}^\mathcal{S}_L$ are classified by the K-groups given in Eq.~\eqref{eq:PHS_TRS_kgroups}.

\subsection{Classification of Gapped Unitaries in Symmetry Classes A and AIII}
Finally, we discuss the classification of time evolution unitaries in the complex symmetry classes, with $\mathcal{S}\in\{\rm A,AIII\}$. As in the previous section, the discussion is simplified if we use the symmetrized unitaries $U_S(\bk,t)$ defined in Eq.~\eqref{eq:modifiedU_def}. In terms of these, the chiral symmetry operator (relevant for class AIII) has the action
\beq
C U_S(\bk,t)C^{-1}=U_S^{\dagger}(\bk,t),
\eeq
a relation that is derived in Appendix~\ref{sec:modified_symmetry_actions}. As before, we will drop the subscript $S$ and assume we are working with symmetrized unitaries throughout this section.

\subsubsection{Hermitian Maps\label{sec:loops_A_AIII}}
As in the previous cases, we use Eq.~\eqref{eq:hermitian_map_definition} to define a Hermitian map $H_U(\bk,t)$ (satisfying $H_U^2=\mathbb{I}$), which corresponds to a given (symmetrized) unitary $U(\bk,t)$. The relevant symmetry operators for classes A and AIII are
\beq
\Sigma=\left(\begin{array}{cc}
\mathbb{I} & 0\\
0 & -\mathbb{I}
\end{array}\right),&~~~~~~&\Gamma=\left(\begin{array}{cc}
0 & C\\
-C & 0
\end{array}\right).
\eeq
The first of these anticommutes with any Hermitian map of the form $H_U$, while the second, which is derived from the CS operator $C$, is relevant only for class AIII. These operators act on $H_U$ to give
\beq
\Sigma H_U(\bk,t)\Sigma ^{-1}=-H_U(\bk,t)&~~~~~~&\mbox{Classes A and AIII}\nonumber\\
\Gamma H_U(\bk,t)\Gamma^{-1}=-H_U(\bk,t)&~~~~~~&\mbox{Class AIII}.
\eeq
We write the set of Hermitian maps that square to the identity, satisfy these symmetries, and which additionally satisfy $H_U(\bk,0)=H_U(\bk,1)$ as $\mathscr{H}^\mathcal{S}$, and write the subset of this that corresponds to unitary loops as $\mathscr{H}^\mathcal{S}_L$. There is again a one-to-one mapping between the set of $U\in\mathcal{U}^\mathcal{S}_L$ and the corresponding set of Hermitian maps $H_U\in\mathscr{H}^\mathcal{S}_L$, a statement that can be proved using the method of Appendix~\ref{sec:121_proof}. Homotopy, stable homotopy and equivalence of pairs in $\mathscr{H}^\mathcal{S}$ can be defined as in previous sections.
\subsubsection{Classification of Unitaries using K-Theory}
We can now use K-theory arguments to map the equivalence classes of pairs in $\mathscr{H}^\mathcal{S}$ onto K-groups. For each symmetry class, we find the mapping to
\beq
\begin{array}{ccccc}
K^{1}(S^{1}\times X)&~~~~~~&\mbox{Class A}\\
K^{2}(S^{1}\times X)&~~~~~~&\mbox{Class AIII}.
\end{array}
\eeq
Restricting to the subset $\mathscr{H}^\mathcal{S}_L$, the groups are then isomorphic to the relative K-groups
\begin{widetext}
\beq
\begin{array}{ccccc}
K^{1}(S^{1}\times X,\{0\}\times X)=K^{0}(X)&&\mbox{~~~~~~Class A}\\
K^{2}(S^{1}\times X,\{0\}\times X)=K^{1}(X)&&\mbox{~~~~~~Class AIII},
\end{array}\label{eq:A_AIII_kgroups}
\eeq
\end{widetext}
which follow from known K-theory isomorphisms \cite{Karoubi:2023223,Atiyah:1966hg}. The last results are identical to the K-groups classifying static topological insulators from these classes, and it follows overall that pairs of unitary loops from classes A and AIII are classified by the K-groups given in Eq.~\eqref{eq:A_AIII_kgroups}. The K-groups capture the complete set of strong and weak invariants.

\begin{table*}[t!]
\be
\arraycolsep=7pt\renewcommand\arraystretch{0.6}\begin{array}{c|cccccccc}
\mathcal{S} & d=0 & 1 & 2 & 3 & 4 & 5 & 6 & 7 \\
\hline
{\rm A}& \mathbb{Z}\times\mathbb{Z}& \emptyset & \mathbb{Z}\times\mathbb{Z}& \emptyset & \mathbb{Z}\times\mathbb{Z}& \emptyset & \mathbb{Z}\times\mathbb{Z}& \emptyset \\
{\rm AIII}  & \emptyset & \mathbb{Z}\times\mathbb{Z}& \emptyset & \mathbb{Z}\times\mathbb{Z}& \emptyset & \mathbb{Z}\times\mathbb{Z}& \emptyset & \mathbb{Z}\times\mathbb{Z}\\
\hline
{\rm AI} & \mathbb{Z}\times\mathbb{Z}& \emptyset & \emptyset & \emptyset & \mathbb{Z}\times\mathbb{Z}& \emptyset & \mathbb{Z}_2\times\mathbb{Z}_2 & \mathbb{Z}_2\times\mathbb{Z}_2 \\
{\rm BDI} & \mathbb{Z}_2\times\mathbb{Z}_2 & \mathbb{Z}\times\mathbb{Z}& \emptyset & \emptyset & \emptyset & \mathbb{Z}\times\mathbb{Z}& \emptyset & \mathbb{Z}_2\times\mathbb{Z}_2\\
{\rm D} & \mathbb{Z}_2\times\mathbb{Z}_2 & \mathbb{Z}_2\times\mathbb{Z}_2 & \mathbb{Z}\times\mathbb{Z}& \emptyset & \emptyset & \emptyset & \mathbb{Z}\times\mathbb{Z}& \emptyset \\
{\rm DIII} & \emptyset & \mathbb{Z}_2\times\mathbb{Z}_2 & \mathbb{Z}_2\times\mathbb{Z}_2 & \mathbb{Z}\times\mathbb{Z}& \emptyset & \emptyset & \emptyset & \mathbb{Z}\times\mathbb{Z}\\
{\rm AII} & \mathbb{Z}\times\mathbb{Z}& \emptyset & \mathbb{Z}_2\times\mathbb{Z}_2 & \mathbb{Z}_2\times\mathbb{Z}_2 & \mathbb{Z}\times\mathbb{Z}& \emptyset & \emptyset & \emptyset \\
{\rm CII} & \emptyset & \mathbb{Z}\times\mathbb{Z}& \emptyset & \mathbb{Z}_2\times\mathbb{Z}_2 & \mathbb{Z}_2\times\mathbb{Z}_2 & \mathbb{Z}\times\mathbb{Z}& \emptyset & \emptyset \\
{\rm C} & \emptyset & \emptyset & \mathbb{Z}\times\mathbb{Z}& \emptyset & \mathbb{Z}_2\times\mathbb{Z}_2 & \mathbb{Z}_2\times\mathbb{Z}_2 & \mathbb{Z}\times\mathbb{Z}& \emptyset \\
{\rm CI} & \emptyset & \emptyset & \emptyset & \mathbb{Z}\times\mathbb{Z}& \emptyset & \mathbb{Z}_2\times\mathbb{Z}_2 & \mathbb{Z}_2\times\mathbb{Z}_2 & \mathbb{Z}\times\mathbb{Z}\\
\end{array}.
\ee
\caption{Classification of two-gapped unitaries by symmetry class and spatial dimension $d$. The table repeats for $d\geq 8$ (Bott periodicity). \label{tab:AZ_class_2g}}
\end{table*}

\section{Discussion\label{sec:discussion}}
In the preceding section we obtained the groups of the equivalence classes of pairs of unitary loops from the ten AZ symmetry classes. These groups are of the form $KR^{0,q}(X)$ for real symmetry classes and  of the form $K^q(X)$ for complex symmetry classes, where $X$ is the Brillouin zone torus. The final K-groups were given in Eqs.~(\ref{eq:PHS_kgroups},\ref{eq:PHS_TRS_kgroups},\ref{eq:A_AIII_kgroups}).

We noted that these K-groups are identical to those obtained from the classification of static (single-gapped) topological Hamiltonians in the same symmetry classes. Depending on the dimension of the Brillouin zone $X$, these K-groups are isomorphic to a group $G\in\{\emptyset,\mathbb{Z}_2,\mathbb{Z}\}$, reproducing the well-known periodic table of topological insulators and superconductors \cite{Kitaev:2009vc}.

If a general unitary evolution can be homotopically deformed to a loop (for example, if there is only one spectral gap at $\epsilon=\pi$), then it falls within the loop classification scheme discussed above. In Sec~\ref{sec:unitary_decomposition}, however, we explained that a unitary loop is just one component of a generic unitary evolution. More commonly, a unitary evolution leads to a final unitary with a set of spectral gaps  at $\epsilon=0$ and $\epsilon=\pi$ (and possible gaps elsewhere in the spectrum), in which case it can be continuously deformed  to a loop followed by a constant Hamiltonian evolution, which itself may be topologically nontrivial. We argued in Sec.~\ref{sec:unitary_decomposition} that the constant evolution, being in one-to-one correspondence with a static Hamiltonian, also follows the usual classification scheme for static topological insulators. For a static Hamiltonian with a single gap at zero, this introduces an additional factor of $KR^{0,q}(X)$ for real symmetry classes and an additional factor of $K^q(X)$ for complex symmetry classes. Combining both the loop and constant Hamiltonian components, we see that the equivalence classes of pairs of unitary evolutions from $\mathcal{U}_{0,\pi}$ are isomorphic to a product of the form $KR^{0,q}(X)\times KR^{0,q}(X)$ (real symmetry classes) or $K^q(X)\times K^q(X)$ (complex symmetry classes). Depending on the dimension of the Brillouin zone, these products are isomorphic to a group $G\times G\in\{\emptyset,\mathbb{Z}_2\times \mathbb{Z}_2,\mathbb{Z}\times\mathbb{Z}\}$, as shown in Table~\ref{tab:AZ_class_2g}. 

In this way, the periodic table for static topological insulators is contained within the periodic table shown in Table.~\ref{tab:AZ_class_2g}. Static topological insulators correspond to evolutions with a trivial unitary loop component, which, in our generalized classification scheme, leads to one factor of the classifying group $G\times G$ being trivial.

At the interface between a system described by unitary $U_1$ and a system described by unitary $U_2$, the principle of bulk-edge correspondence asserts that there should exist protected edge modes, shown schematically in Fig.~\ref{fig:bulk_edge}. A particular edge mode can be labelled by a quantum number, and the complete set of quantum numbers is isomorphic to the set of equivalence classes of $(U_1,U_2)$. We can therefore determine the quantum number of the edge modes by appealing to the bulk classification scheme outlined above.

\begin{figure}[t]
\includegraphics[scale = 0.5]{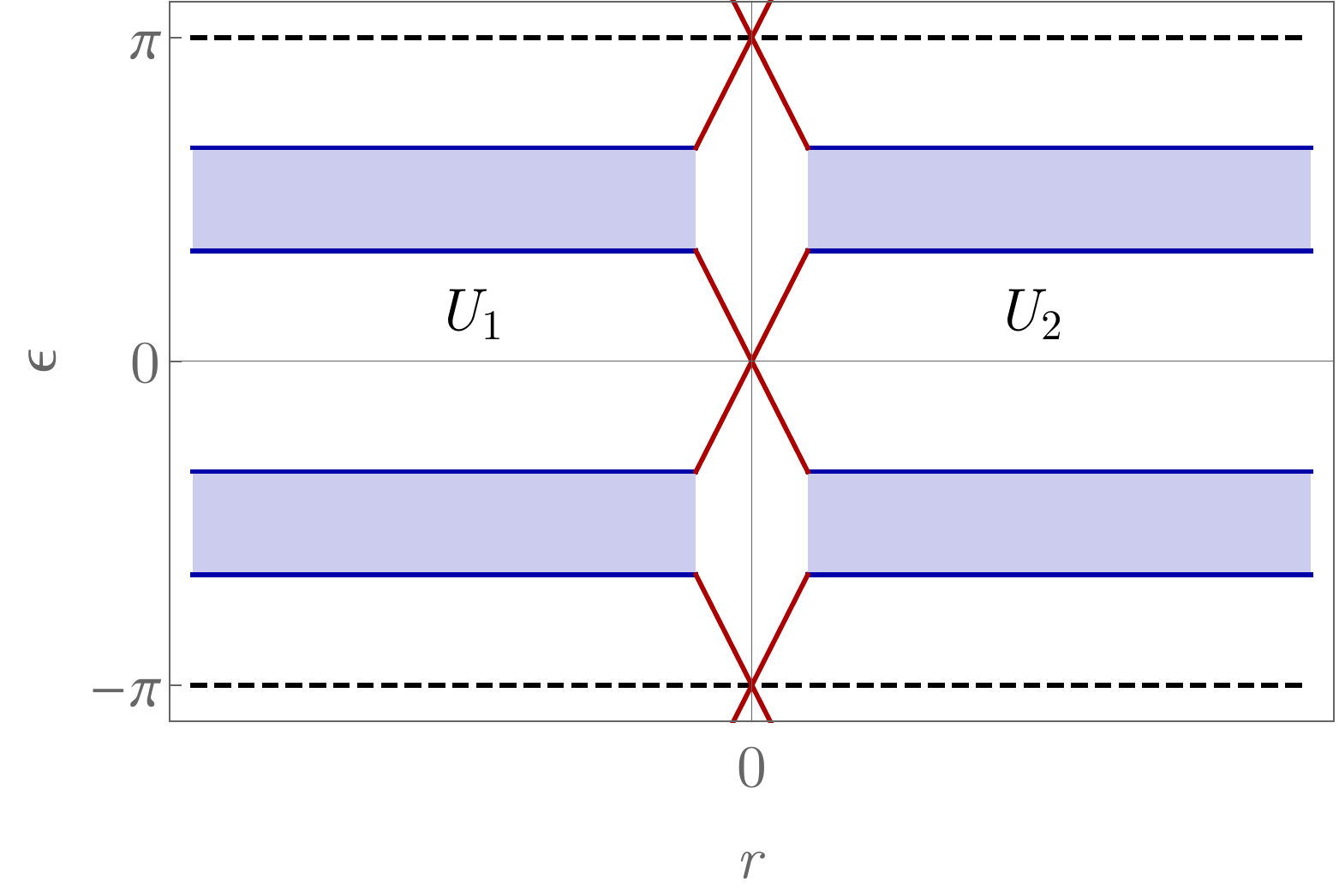}
\caption{Schematic diagram of the interface between a system described by unitary $U_1$ and a system described by unitary $U_2\neq U_1$. The vertical axis shows the quasienergy spectrum at $t=T$ and the horizontal axis gives the displacement, with the interface occurring in the neighborhood around $r=0$. Bulk bands are shown in blue, with protected edge modes shown in red at the interface.\label{fig:bulk_edge}}
\end{figure}

To simplify the discussion, we again consider only unitary evolutions that are gapped at both $\epsilon=0$ and $\epsilon=\pi$. This is generic enough to exhibit Floquet topological edge modes, and the conclusions can easily be extended to systems with additional gaps. We note that a unitary evolution of this form is classified by a pair of integers from the appropriate group $G\times G$, according to Table~\ref{tab:AZ_class_2g}. We write the integer associated with the loop component as $n_L$ and the integer associated with the gap in the constant Hamiltonian evolution as $n_C$. These must have a one-to-one relation with the quantum numbers associated with the edge modes in the gaps, which we write as $n_0$ for the gap at $\epsilon=0$, and as $n_\pi$ for the gap at $\epsilon=\pi$.

We leave a full discussion of the bulk-edge correspondence to future work, but for completeness, we note here that the two sets of integers are related through
\beq
n_\pi&=&n_L\nonumber\\
n_0&=&n_C+n_L,\label{eq:bulk_edge_integers}
\eeq
where addition is taken modulo 2 for systems with a $\mathbb{Z}_2$ classification. In general, we see that the edge modes associated with the gap at $\epsilon=0$ and those at $\epsilon=\pi$ may be different. 

In a system with PHS, the quasienergy spectrum must be symmetric about $\epsilon=0$ and $\epsilon=\pi$. If there are gaps at these points, then Majorana modes may be present at the boundary of an open system. However, for a system without PHS, there is nothing special about the quasienergies $\epsilon=0$ and $\epsilon=\pi$. In these cases, we can always homotopically deform the unitary evolution so that the gaps occur at any values of our choosing, while staying within the symmetry class. In this way, for a unitary with two gaps \emph{anywhere} in the spectrum and without PHS, we can use homotopy to move the gaps to $\epsilon=0$ and $\epsilon=\pi$ so that the correspondence given by Eq.~\eqref{eq:bulk_edge_integers} continues to hold. More generally, for a system with $n_g$ gaps and no PHS, there will be an invariant integer associated with each gap in the Floquet Hamiltonian, $n_{C_i}$ with $1\leq i < n_g$. A simple extension of Eq.~\eqref{eq:bulk_edge_integers} gives the edge invariant of the $i$th gap as $n_i=n_{C_i}+n_L$. 

\section{Conclusions\label{sec:conclusions}}
In this paper we have used methods from K-theory to systematically classify noninteracting Floquet topological insulators across all AZ symmetry classes and dimensions. In the process, we discover a number of new topological Floquet phases. It would be interesting to see if these can be realized in an experimental setting. Our results, summarized in Table~\ref{tab:AZ_class_2g}, show that the classification of a static topological system described by group $G$ is extended to the product group $G\times G$ in the time-dependent case, assuming a canonical quasienergy spectrum with gaps at $\epsilon=0$ and $\epsilon=\pi$. Our approach uses the fact that a general time-evolution operator can be continuously deformed into a unitary loop followed by a constant Hamiltonian evolution, and the two factors in the resulting classification scheme can be interpreted as arising from these two unitary components. In Sec.~\ref{sec:discussion} we stated how this bulk classification scheme relates to the number of protected edge modes that may arise in a system with a boundary. The discussion of topological invariants and the bulk-boundary correspondence for more general band structures are interesting avenues for future work. 

As noted in the introduction, some elements of our periodic table have appeared elsewhere in the literature in the context of Floquet systems, using different methods \cite{Jiang:2011cw,Thakurathi:2013dt,Rudner:2013bg,Carpentier:2015dn,Asboth:2014bg,Nathan:2015bi,Fruchart:2016hk}. While our results are consistent with these works, a detailed comparison yields a number of differences. First, our definition of chiral symmetry differs from that of Ref.~\onlinecite{Asboth:2014bg}, and it would be worth investigating under what circumstances these definitions are equivalent and to what extent this affects the classification scheme. Secondly, Ref.~\onlinecite{Rudner:2013bg} introduces a frequency domain formulation for the study of Floquet systems that explicitly makes use of time periodicity. It would be of interest to explore whether some variant of this approach applies to the more general unitary evolutions we have considered here.

In this noninteracting setting, the unique decomposition of a unitary evolution into two components (as defined in the text) could be proved rigorously, allowing us to separate the dynamical topological behavior from the static topological behavior of the Floquet Hamiltonian. It is likely that this unitary decomposition is applicable more generally, including in interacting systems if many-body complications are dealt with appropriately. Indeed, we use this unitary decomposition as a working assumption in Ref.~\onlinecite{Roy:2016wd}, where it aids in the classification of Floquet SPTs in one dimension. This approach may be useful in the classification of driven, interacting topological phases more generally, a field in which much progress has recently been made \cite{Roy:2016wd,Else:2016tj,Potter:2016tb,vonKeyserlingk:2016vq}.

\begin{acknowledgements}
The authors would like to thank S.~L.~Sondhi for bringing this problem to our attention and sharing some particularly useful isights. The authors also thank D.~Reiss, C.~W.~von Keyserlingk and B.~Fregoso for fruitful discussions, and thank Michael Kolodrubetz for pointing out an error in an earlier version of this manuscript.
R.~R. and F.~H. acknowledge support from the NSF under CAREER DMR-1455368 and the Alfred P. Sloan foundation.
\end{acknowledgements}

\appendix
\begin{widetext}
\section{Action of Symmetry Operators on Unitaries\label{sec:symmetry_actions}}
In this appendix, we prove the action of the three symmetry operators on the time-evolution unitary. In order to simplify certain steps of the calculation, we will make use of the two-point (non-symmetrized) unitary operators defined through
\beq
U(\bk;t_2,t_1)&=&\mathcal{T}\exp\left(-i\int_{t_1}^{t_2}H(\bk,t')\dd t'\right),\label{eq:aux_unitaries}
\eeq
where we see that $U(\bk;t,0)\equiv U(\bk,t)$. These auxiliary unitaries satisfy the properties
\beq
\left[U(\bk;t_2,t_1)\right]^\dagger&=&U(\bk;t_1,t_2)\label{eq:aux_unitary_relations}\\
U(\bk;t_3,t_1)&=&U(\bk;t_3,t_2)U(\bk;t_2,t_1).\nonumber
\eeq

\subsection{Particle-hole Symmetry}
For the PHS operator, we start from
\beq
PH(\bk,t)P^{-1}&=&-H^*(-\bk,t)
\eeq
and find
\beq
PU(\bk,t)P^{-1}&=&P\left[\mathcal{T}\exp\left(-i\int_0^tH(\bk,t')\dd t'\right)\right]P^{-1}\nonumber\\
&=&\left[\sum_n\frac{(-i)^n}{n!}\mathcal{T}\int_0^t\dd t_1\ldots\int^t_0\dd t_n\,PH(\bk,t_1)P^{-1}\ldots PH(\bk,t_n)P^{-1}\right]\nonumber\\
&=&\left[\sum_n\frac{(+i)^n}{n!}\mathcal{T}\int_0^t\dd t_1\ldots\int^t_0\dd t_n\,H^*(-\bk,t_1)\ldots H^*(-\bk,t_n)\right]\nonumber\\
&=&\left[\mathcal{T}\exp\left(-i\int_0^tH(-\bk,t')\dd t'\right)\right]^*\nonumber\\
&=&U^*(-\bk,t).
\eeq
\subsection{Time-reversal Symmetry}
For the TRS operator, we start from
\beq
\theta H(\bk,t)\theta^{-1}&=&H^*(-\bk,T-t)
\eeq
and find
\beq
\theta U(\bk,t)\theta^{-1}&=&\theta\left[\mathcal{T}\exp\left(-i\int_0^tH(\bk,t')\dd t'\right)\right]\theta^{-1}\nonumber\\
&=&\left[\sum_n\frac{(-i)^n}{n!}\mathcal{T}\int_0^t\dd t_1\ldots\int^t_0\dd t_n\,\theta H(\bk,t_1)\theta ^{-1}\ldots \theta H(\bk,t_n)\theta^{-1}\right]\nonumber\\
&=&\left[\sum_n\frac{(-i)^n}{n!}\mathcal{T}\int_0^t\dd t_1\ldots\int^t_0\dd t_n\,H^*(-\bk,T-t_1)\ldots H^*(-\bk,T-t_n)\right]\nonumber\\
&\stackrel{=}{t_i\to T-t_i}&\left[\sum_n\frac{(+i)^n}{n!}\mathcal{T}\int_T^{T-t}\dd t_1\ldots\int^{T-t}_T\dd t_n\,H^*(-\bk,t_1)\ldots H^*(-\bk,t_n)\right]\nonumber\\
&=&\left[\mathcal{T}\exp\left(-i\int_T^{T-t}H(-\bk,t')\dd t'\right)\right]^*\nonumber\\
&=&U^*(-\bk;T-t, T).
\eeq
We rewrite this using Eq.~\eqref{eq:aux_unitary_relations} to obtain 
\beq
\theta U(\bk,t)\theta^{-1}&=&U^*(-\bk;T-t,0)U^*(-\bk;0,T)\\
&=&U^*(-\bk,T-t)U^{\dagger*}(-\bk,T).\nonumber
\eeq
\subsection{Chiral Symmetry}
For the CS operator, we start from
\beq
C H(\bk,t)C^{-1}&=&-H(\bk,T-t)
\eeq
and find
\beq
C U(\bk,t)C^{-1}&=&C\left[\mathcal{T}\exp\left(-i\int_0^tH(\bk,t')\dd t'\right)\right]C^{-1}\nonumber\\
&=&\left[\sum_n\frac{(-i)^n}{n!}\mathcal{T}\int_0^t\dd t_1\ldots\int^t_0\dd t_n\,C H(\bk,t_1)C ^{-1}\ldots C H(\bk,t_n)C^{-1}\right]\nonumber\\
&=&\left[\sum_n\frac{(+i)^n}{n!}\mathcal{T}\int_0^t\dd t_1\ldots\int^t_0\dd t_n\,H(\bk,T-t_1)\ldots H(\bk,T-t_n)\right]\nonumber\\
&\stackrel{=}{t_i\to T-t_i}&\left[\sum_n\frac{(-i)^n}{n!}\mathcal{T}\int_T^{T-t}\dd t_1\ldots\int^{T-t}_T\dd t_n\,H(\bk,t_1)\ldots H(\bk,t_n)\right]\nonumber\\
&=&\left[\mathcal{T}\exp\left(-i\int_T^{T-t}H(\bk,t')\dd t'\right)\right]\nonumber\\
&=&U(\bk;T-t,T).
\eeq
We rewrite this using Eq.~\eqref{eq:aux_unitary_relations} to obtain 
\beq
C U(\bk,t)C^{-1}&=&U(\bk;T-t,0)U(\bk;0,T)\\
&=&U(\bk,T-t)U^{\dagger}(\bk,T).\nonumber
\eeq
\section{Action of Symmetry Operators on Symmetrized Unitaries\label{sec:modified_symmetry_actions}}
In this appendix, we prove the action of the three symmetry operators on the symmetrized time-evolution unitaries $U_S(\bk,t)$ that are defined in Eq.~\eqref{eq:modifiedU_def}. We will derive these relations using the corresponding expressions for the original unitaries, which we derived previously in Appendix~\ref{sec:symmetry_actions}, and will also make use of the two-point unitaries defined in Eq.~\eqref{eq:aux_unitaries}. In particular, we note that
\beq
U_S(\bk,t)&=&U\left(\bk;\frac{1+t}{2},\frac{1-t}{2}\right)\nonumber\\
&=&U\left(\bk;\frac{1+t}{2},0\right)U\left(\bk;0,\frac{1-t}{2}\right)\nonumber\\
&=&U\left(\bk,\frac{1+t}{2}\right)\left[U\left(\bk,\frac{1-t}{2}\right)\right]^\dagger
\eeq
We will also make use of the symmetrized unitary relation $U_S^{\dagger}(\bk,t)=U_S(\bk,-t)$.
\subsection{Particle-hole Symmetry}
Starting from the unitary PHS relation
\beq
PU(\bk,t)P^{-1}&=&U^*(-\bk,t),
\eeq
we find that the symmetrized unitaries satisfy
\beq
PU_S(\bk,t)P^{-1}&=&PU\left(\bk;\frac{1+t}{2}\right)P^{-1}P\left[U\left(\bk;\frac{1-t}{2}\right)\right]^\dagger P^{-1}\nonumber\\
&=&U^*\left(-\bk,\frac{1+t}{2}\right)\left[P^{-1}U\left(\bk,\frac{1-t}{2}\right)P\right]^\dagger \nonumber\\
&=&U^*\left(-\bk,\frac{1+t}{2}\right)\left[U^*\left(-\bk,\frac{1-t}{2}\right)\right]^\dagger \nonumber\\
&=&U_S^*(-\bk,t).
\eeq
\subsection{Time-reversal Symmetry}
Starting from the unitary TRS relation
\beq
\theta U(\bk,t)\theta ^{-1}&=&U^*(-\bk,1-t)U^{\dagger *}(-\bk,1),
\eeq
we find that the symmetrized unitaries satisfy
\beq
\theta U_S(\bk,t)\theta^{-1}&=&\theta U\left(\bk,\frac{1+t}{2}\right)\theta^{-1}\theta\left[U\left(\bk,\frac{1-t}{2}\right)\right]^\dagger\theta^{-1}\nonumber\\
&=& U^*\left(-\bk,\frac{1-t}{2}\right)U^{\dagger*}\left(-\bk,1\right)\left[U^*\left(-\bk,\frac{1+t}{2}\right)U^{\dagger*}\left(-\bk,1\right)\right]^\dagger\nonumber\\
&=& U^*\left(-\bk,\frac{1-t}{2}\right)\left[U^*\left(-\bk,\frac{1+t}{2}\right)\right]^\dagger\nonumber\\
&=&U^{*}_S(-\bk,-t).
\eeq
Then, using the properties of symmetrized unitaries, this becomes
\beq
\theta U_S(\bk,t)\theta^{-1}=U_S^{*}(-\bk,-t)=U_S^{\dagger*}(-\bk,t).
\eeq
\subsection{Chiral Symmetry}
Starting from the unitary CS relation
\beq
C U(\bk,t)C ^{-1}&=&U(\bk,1-t)U^\dagger(\bk,1),
\eeq
we find
\beq
C U_S(\bk,t)C^{-1}&=&C U\left(\bk,\frac{1+t}{2}\right)C^{-1}C\left[U\left(\bk,\frac{1-t}{2}\right)\right]^\dagger C^{-1}\nonumber\\
&=& U\left(\bk,\frac{1-t}{2}\right)U^\dagger(\bk,1)\left[U\left(\bk,\frac{1+t}{2}\right)U^\dagger(\bk,1)\right]^\dagger\nonumber\\
&=& U\left(\bk,\frac{1-t}{2}\right)\left[U\left(\bk,\frac{1+t}{2}\right)\right]^\dagger\nonumber\\
&=&U_S(\bk,-t).
\eeq
Then, again using the properties of symmetrized unitaries, we obtain
\beq
C U_S(\bk,t)C^{-1}=U_S(\bk,-t)=U_S^{\dagger}(\bk,t).
\eeq
\end{widetext}

\section{Decomposition of Unitaries\label{sec:decomposition_proof}}
In this appendix, we prove the unitary decomposition theorem given in Sec.~\ref{sec:unitary_decomposition}, which is reproduced below.
\begin{theorem}
Every unitary $U\in\mathcal{U}^\mathcal{S}_{0,\pi}$ can be continuously deformed to a composition of a unitary loop $L$ and a constant Hamiltonian evolution $C$, which we write as $U\approx L*C$. $L$ and $C$ are unique up to homotopy.
\end{theorem}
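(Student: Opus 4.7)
My plan is to build the decomposition explicitly using the principal logarithm of the endpoint unitary. Since $U\in\mathcal{U}^\mathcal{S}_{0,\pi}$ has a spectral gap at $\epsilon=\pi$ at $t=1$, no eigenvalue of $U(\bk,1)$ equals $-1$ anywhere on $X$, so one can set $H_F(\bk) = i\ln U(\bk,1)$ using the branch of $\ln$ with cut at $-1$. This yields a continuous static Hamiltonian on $X$ with eigenvalues strictly in $(-\pi,\pi)$; because the principal logarithm commutes with complex conjugation and with unitary conjugation by $P$, $\theta$, or $C$, the symmetry relations of Eqs.~\eqref{eq:unitary_PHS}--\eqref{eq:unitary_CS} evaluated at $t=1$ transfer to $H_F$. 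Setting $C(\bk,t) = e^{-iH_F(\bk)t}\in\mathcal{U}^\mathcal{S}_C$ and $L(\bk,t) = U(\bk,t)\,e^{iH_F(\bk)t}$, a direct computation gives $L(\bk,0) = L(\bk,1) = \mathbb{I}$, so $L\in\mathcal{U}^\mathcal{S}_L$, and pointwise $U = L\cdot C$.

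\textbf{From pointwise product to composition.} Next I would deform the pointwise factorization into the concatenated composition $L*C$ of Eq.~\eqref{eq:composition}. Let $\alpha_s,\beta_s:[0,1]\to[0,1]$ be monotone rescalings with $\alpha_0(t) = \beta_0(t) = t$, $\alpha_1(t) = \min(2t,1)$, and $\beta_1(t) = \max(2t-1,0)$, and set $h_s(\bk,t) = L(\bk,\alpha_s(t))\,C(\bk,\beta_s(t))$. Then $h_0 = U$ and $h_1 = L*C$, while the endpoint $h_s(\bk,1) = L(\bk,1)\,C(\bk,1) = U(\bk,1)$ is held fixed for every $s$; hence both gaps at $\epsilon=0,\pi$ are preserved and $h_s\in\mathcal{U}^\mathcal{S}_{0,\pi}$ for all intermediate $s$. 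For the classes with TRS, one works with the symmetrized unitaries $U_S$ of Eq.~\eqref{eq:modifiedU_def} and uses the symmetric composition rule in Eq.~\eqref{eq:composition_TRS}, choosing the rescalings symmetrically about $t=1/2$ so that the homotopy respects $\theta$.

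\textbf{Uniqueness.} Suppose $U \approx L_1*C_1 \approx L_2*C_2$. Evaluating at $t=1$ yields $U(\bk,1) = C_1(\bk,1) = C_2(\bk,1)$, and since the gapped static Hamiltonian with spectrum in $(-\pi,\pi)$ satisfying $e^{-iH_F} = U(\bk,1)$ is unique, both constants coincide on the nose with $e^{-iH_F(\bk)t}$; in particular $C_1\approx C_2$ in $\mathcal{U}^\mathcal{S}_C$. For the loop factor I would run a groupoid-style cancellation: the time-reversed constant $\bar C(\bk,t) = C(\bk,1-t)\,C(\bk,1)^{-1}$ is again a constant Hamiltonian evolution (with Hamiltonian $-H_F$), and $C*\bar C$ is homotopic to the trivial loop via a linear rescaling in $t$, so concatenating on the right gives $L_1 \approx L_1*C*\bar C \approx L_2*C*\bar C \approx L_2$ in $\mathcal{U}^\mathcal{S}_L$. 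The main technical obstacle is keeping every step symmetric in the TRS classes; the construction and cancellation are elementary once one commits to the symmetrized unitaries, but verifying that the three-piece composition of Eq.~\eqref{eq:composition_TRS} interacts correctly with the reparametrization homotopy requires careful bookkeeping.
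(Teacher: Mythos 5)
Your existence argument is correct in outline and takes a genuinely different route from the paper's: you factor $U$ pointwise as $L(\bk,t)=U(\bk,t)e^{iH_F(\bk)t}$ and then reparametrize into the concatenation, whereas the paper deforms $U$ by appending constant evolutions, $h(s)=\left[U*C_-(s)\right]*C_+(s)$ with Hamiltonians $\mp sH_F$, so that $h(1)=L*C_+(1)$ with $L=U*C_-(1)$ automatically a loop with the right symmetry. Your construction is more explicit, but in the TRS classes your $L$ does not obviously stay in the class: with symmetrized unitaries one finds $\theta\,U_S(\bk,t)e^{iH_F(\bk)t}\,\theta^{-1}=U_S^{\dagger*}(-\bk,t)\,e^{iH_F^*(-\bk)t}$, while the required relation has the two factors in the opposite order, and $U_S$ and $e^{iH_F t}$ do not commute; this is exactly the bookkeeping you defer without resolving, and it is what the paper's composition-based construction (with the symmetric composition rule) sidesteps.

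The genuine gap is in uniqueness. Homotopy as defined in the paper preserves only the gap structure at $t=1$, not the endpoint itself, so from $L_1*C_1\approx L_2*C_2$ you may not conclude $C_1(\bk,1)=U(\bk,1)=C_2(\bk,1)$ ``on the nose''; the constants of two homotopic decompositions generally differ. The correct route (and the paper's) is to take the Floquet Hamiltonian $H(s)$ of the interpolating family $h(s)$, which varies continuously because the gaps at $\epsilon=0,\pi$ are preserved for every $s$, yielding $C_1\approx C_2$. The same defect breaks your cancellation step: concatenating the \emph{fixed} inverse constant $\bar C$ onto $h(s)$ does not give loops at intermediate $s$, since the endpoint of $h(s)$ is $e^{-iH(s)}\neq e^{-iH_F}$ in general; worse, the spectrum of the product $h(s)(\bk,1)\,e^{iH_F(\bk)}$ is uncontrolled, so even the gap condition may fail along your proposed homotopy. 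The missing idea is to cancel with the $s$-dependent constant evolution generated by $-H(s)$: setting $g(s)=h(s)*C_-(s)$ makes $g(s)$ a loop for \emph{every} $s$ and interpolates between $L_1$ and $L_2$ within the space of loops, which is the equivalence the loop classification actually requires.
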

The proof of this theorem has two stages. First, we show that there exists a decomposition $U \approx L*C$:
\begin{lemma}\label{lemma:loop_constant}
Every unitary $U\in\mathcal{U}^\mathcal{S}_{0,\pi}$ is homotopic to a product $L*C$, where $L$ is a unitary loop and $C$ is a constant evolution due to some static Hamiltonian (which is gapped at zero).
\end{lemma}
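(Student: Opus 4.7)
The plan is to extract the constant evolution component from $U$ using the Floquet Hamiltonian, then exhibit the remainder as a unitary loop. First, because $U(\bk,1)$ is gapped at both $\epsilon=0$ and $\epsilon=\pi$, the branch cut of the logarithm may be placed in the gap at $\epsilon=\pi$, so Eq.~\eqref{eq:FloquetHamDef} yields a well-defined, continuous Floquet Hamiltonian $H_F(\bk)$ whose eigenvalues satisfy $0<|\epsilon|<\pi$. Evaluating Eqs.~\eqref{eq:unitary_PHS}, \eqref{eq:unitary_TRS}, and \eqref{eq:unitary_CS} at $t=1$ (where $U(\bk,0)=\mathbb{I}$) shows that $H_F$ inherits the corresponding static symmetries of the AZ class, namely $PH_FP^{-1}=-H_F^*$, $\theta H_F\theta^{-1}=H_F^*$, and $CH_FC^{-1}=-H_F$. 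Consequently $C(\bk,t):=\exp[-iH_F(\bk)\,t]$ is a constant-Hamiltonian evolution lying in the correct symmetry class, and $C(\bk,1)=U(\bk,1)$.

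Next, introduce the reverse evolution $\bar C(\bk,t):=\exp[+iH_F(\bk)\,t]$, generated by $-H_F$, and define the candidate loop $L:=U\ast\bar C$. Its endpoint is $\bar C(\bk,1)\,U(\bk,1)=e^{iH_F}e^{-iH_F}=\mathbb{I}$, so $L\in\mathcal{U}^\mathcal{S}_L$, using Eq.~\eqref{eq:composition_TRS} for TRS classes and Eq.~\eqref{eq:composition} otherwise. The remaining task is to prove $L\ast C\approx U$, which is the standard path-homotopy statement that a path is homotopic to itself concatenated with a contractible loop at its endpoint. Explicitly, I would build a one-parameter family by rescaling the ``back-and-forth'' piece: let $\bar C_s(\bk,t):=\exp[+isH_F(\bk)t]$ and $C_s(\bk,t):=\exp[-isH_F(\bk)t]$, and consider the reparametrized composition $h(s)=U\ast(\bar C_s\ast C_s)$. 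At $s=1$, associativity of $\ast$ up to reparametrization identifies $h(1)$ with $(U\ast\bar C)\ast C=L\ast C$, while at $s=0$ the inner loop collapses to the constant path at $\mathbb{I}$ and $h(0)$ agrees with $U$ up to a time reparametrization. Crucially, the endpoint of $h(s)$ at $t=1$ is always $(\bar C_s\ast C_s)(\bk,1)\,U(\bk,1)=\mathbb{I}\cdot U(\bk,1)=U(\bk,1)$, so the homotopy remains inside $\mathcal{U}^\mathcal{S}_{0,\pi}$.

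The construction itself is the familiar ``factor out a chosen path'' trick for paths in a topological group, so the main obstacle is verifying that every step respects the AZ symmetry. Three checks are needed: (i) that $H_F$ obeys the static version of each symmetry coming from Eqs.~\eqref{eq:unitary_PHS}--\eqref{eq:unitary_CS} evaluated at $t=1$; (ii) that each composition $U\ast\bar C$, $L\ast C$, and $h(s)$ is assembled with the symmetry-preserving composition rule (in the TRS case this means threading the loop symmetrically about $t=1/2$ as in Eq.~\eqref{eq:composition_TRS}); and (iii) that the scaling $H_F\mapsto sH_F$ commutes with the AZ operations, which it does because the relations $PH_FP^{-1}=-H_F^*$, $\theta H_F\theta^{-1}=H_F^*$, and $CH_FC^{-1}=-H_F$ are linear in $H_F$. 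The essential ingredient, without which the argument collapses, is the gap at $\epsilon=\pi$, since it is exactly what makes the logarithm in Eq.~\eqref{eq:FloquetHamDef} single-valued and continuous in $\bk$.
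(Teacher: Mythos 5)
Your proof is correct and follows essentially the same route as the paper's: both extract the Floquet Hamiltonian $H_F$, form the loop $L=U\ast e^{+iH_Ft}$, and interpolate to $L\ast C$ via constant evolutions generated by $\pm sH_F$ with the endpoint at $t=1$ held fixed so the gap structure is preserved. The only difference is cosmetic — the paper groups the family as $\left[U\ast C_-(s)\right]\ast C_+(s)$ so that $s=1$ is literally $L\ast C$, sparing the associativity/reparametrization step you invoke — and your extra symmetry checks (on $H_F$ and the TRS composition rule) are consistent with, if more explicit than, the paper's argument.
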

\begin{proof}
Let $H_F$ be the (unique) Floquet Hamiltonian for $U$ and let $C_\pm(s)$ be the constant evolution unitaries corresponding to the static Hamiltonians $\pm s H_F$. Consider the continuous family of unitaries 
\beq
h(s) &=& \left[U * C_-(s)\right]* C_+(s).
\eeq
It is clear that $h(0)$ is homotopic to $U$ and $h(1)$ is of the form $L * C_+(1)$ with $L =U * C_-™(1)$. The endpoint of $U * C_-™(1)$ is $U(1) \exp(iH_F) = \mathbb{I}$.
\end{proof}
Secondly, we show that the factors $L$ and $C$ involved in a decomposition $L*C$ are unique up to homotopy:
\begin{lemma}\label{lemma:loop_constant_homotopy}
Two compositions satisfy $L_1*C_1\approx L_2*C_2$ if and only if $L_1\approx L_2$ and $C_1\approx C_2$.
\end{lemma}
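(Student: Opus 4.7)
The plan is to prove the two directions separately. The $(\Leftarrow)$ direction is immediate: given homotopies $h_L(s): L_1 \approx L_2$ inside $\mathcal{U}^\mathcal{S}_L$ and $h_C(s): C_1 \approx C_2$ inside $\mathcal{U}^\mathcal{S}_C$, the pointwise composition $s \mapsto h_L(s) * h_C(s)$ provides a homotopy from $L_1 * C_1$ to $L_2 * C_2$ that remains within $\mathcal{U}^\mathcal{S}_{0,\pi}$ for every $s$, because the endpoint of $h_L(s) * h_C(s)$ coincides with the endpoint of $h_C(s)$, which stays gapped at both $\epsilon = 0$ and $\epsilon = \pi$ by construction.

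For the $(\Rightarrow)$ direction, the key structural observation is that the constant component of any decomposition $L * C$ is canonically recoverable from the final unitary. Because $L(1) = \mathbb{I}$, one has $(L*C)(1) = C(1) = e^{-iH_F}$, where $H_F = i\ln(L*C)(1)$ is the unique gapped Floquet Hamiltonian defined by placing the branch cut in the $\epsilon = \pi$ gap. Given a homotopy $h(s)$ from $L_1*C_1$ to $L_2*C_2$ through $\mathcal{U}^\mathcal{S}_{0,\pi}$, extract $H_F(s) := i\ln h(s)(1)$; this is well-defined and continuous in $s$, since $h(s)(1)$ stays uniformly away from the branch cut. The family $\{H_F(s)\}$ is then a continuous path of gapped static Hamiltonians from $H_1$ to $H_2$, and via the bijection between $\mathcal{U}^\mathcal{S}_C$ and $\mathcal{H}^\mathcal{S}$ recorded in Sec.~\ref{sec:unitary_decomposition}, this produces the desired homotopy $C_1 \approx C_2$.

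To extract $L_1 \approx L_2$, I would adapt the construction used in the proof of Lemma~\ref{lemma:loop_constant}: for each $s$ set $\tilde L(s) := h(s) * \overline{C}(s)$, where $\overline{C}(s)$ is the constant Hamiltonian evolution generated by $-H_F(s)$. By construction $\tilde L(s)(1) = h(s)(1) \cdot e^{iH_F(s)} = \mathbb{I}$, so each $\tilde L(s)$ lies in $\mathcal{U}^\mathcal{S}_L$, yielding a continuous homotopy of loops from $\tilde L(0) = L_1 * (C_1 * \overline{C}_1)$ to $\tilde L(1) = L_2 * (C_2 * \overline{C}_2)$. The remaining step is to verify that $L * (C * \overline{C}) \approx L$ in $\mathcal{U}^\mathcal{S}_L$ for any loop $L$ and constant $C$; this follows by null-homotoping $C * \overline{C}$ through the family $s \mapsto C_{sH_F} * \overline{C}_{sH_F}$, which stays a loop for every $s$ (the two halves cancel at the endpoint) and degenerates to the trivial evolution at $s = 0$, after which a standard reparameterization stretches $L$ back to the full time interval.

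The main obstacle lies in keeping the construction inside the appropriate symmetry class throughout. For PHS and CS the check that $-H_F(s)$ respects the relevant algebraic relation is direct (e.g.\ $P(-H)P^{-1} = -(-H)^{*}$ follows from $PHP^{-1} = -H^{*}$), so $\overline{C}(s)$ remains in the correct class automatically. Under TRS the composition rule of Eq.~\eqref{eq:composition_TRS} is palindromic rather than concatenative, so to make sense of iterated compositions such as $L * (C * \overline{C})$ in a TRS-preserving way, I would work throughout with the symmetrized unitaries of Sec.~\ref{sec:loops_PHS_TRS}, where symmetrization intertwines with composition and inversion so that $\overline{C}(s)$ and $h(s) * \overline{C}(s)$ inherit the required symmetry. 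The secondary technical point, continuity of the branch-cut logarithm $H_F(s)$, is ensured by the uniform $\epsilon = 0$ and $\epsilon = \pi$ gaps along $h(s)$, which keep the eigenvalues of $h(s)(1)$ bounded away from both the branch cut at $e^{\pm i\pi}$ and the degenerate locus at $1$.
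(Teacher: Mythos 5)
Your proposal is correct and takes essentially the same route as the paper's proof: you extract the Floquet Hamiltonian $H_F(s)$ along the homotopy $h(s)$ to obtain $C_1\approx C_2$, and then compose $h(s)$ with the constant evolution generated by $-H_F(s)$ to obtain a family of loops interpolating between (representatives of) $L_1$ and $L_2$. The only difference is that you explicitly contract $C*\overline{C}$ and check symmetry-class preservation, steps the paper leaves implicit.
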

\begin{proof}
$L_1 * C_1\approx L_2 * C_2$ implies there is some function $h(s)$ for $s \in [0,1]$ such that $h(s)$ preserves the gap structure for all values of $s$ and
\beq
h(0) = L_1 * C_1,&~~~~~~&h(1) = L_2 * C_2.
\eeq
Let $H(s)$ be the Floquet Hamiltonian corresponding to the unitary $h(s)$. $H(s)$ then provides a homotopy between the Floquet Hamiltonians of $C_1$ and $C_2$.

Let $C_+(s)$ be the constant evolution unitary corresponding to the Hamiltonian $H(s)$. Since $H(s)$ is independent of time, $C_+(s)$ is a constant evolution unitary that continuously interpolates between $C_+(0) = C_1$ and $C_+(1) = C_2$. Thus, $C_1\approx C_2$.

Now, let $g(s) = h(s) * C_-(s)$, where $C_-™(s)$ is the constant Hamiltonian unitary with Hamiltonian $-H(s)$. $g(s)$ is a loop for all $s$ and interpolates between $L_1$ and $L_2$. Thus, $L_1\approx L_2$. The proof in the reverse direction follows trivially from the definition of homotopy.
\end{proof}

\section{Proof of One-to-one Mapping between Unitaries and Hermitian Maps\label{sec:121_proof}}
In this section, we prove the one-to-one correspondence between unitary evolutions and Hermitian maps defined according to Eq.~\ref{eq:hermitian_map_definition}. We give the proof for the case of PHS only, but note that the method may easily be extended to other symmetry classes.
\begin{claim}
There is a one-to-one mapping between the set of Hermitian matrix maps that satisfy
\beq
P_{1}H_U(\bk,t)P_{1}^{-1}&=&-H_U^*(-\bk,t)\label{eq:Hprop1}\\
P_{2}H_U(\bk,t)P_{2}^{-1}&=&H_U^*(-\bk,t)\label{eq:Hprop2}\\
H_U^2&=&\mathbb{I}\label{eq:Hprop3}
\eeq
and the set of unitary maps that satisfy $PU(\bk,t)P^{-1}=U^*(-\bk,t)$.
\end{claim}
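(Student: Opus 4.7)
My plan is to construct both directions of the bijection explicitly and then verify that they are mutual inverses. In the forward direction, given a unitary $U(\bk,t)$ satisfying the PHS relation, I would define $H_U$ by Eq.~\eqref{eq:hermitian_map_definition} and verify each of the three Hermitian-map properties by direct substitution: the condition $H_U^2=\mathbb{I}$ follows immediately from $UU^\dagger=U^\dagger U=\mathbb{I}$, while the $P_1$ and $P_2$ relations reduce to the single PHS relation on $U$ after conjugating the block-diagonal symmetry operators through the off-diagonal $H_U$ and comparing the $(1,2)$ and $(2,1)$ blocks.

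For the converse, I would start with an arbitrary Hermitian map $H$ satisfying the three stated conditions and expand it in $2\times 2$ block form with Hermitian diagonal blocks $A$, $D$ and off-diagonal block $B$ (so the lower-left block is $B^\dagger$). The two PHS-type relations, when expanded block by block, produce incompatible constraints on the diagonal blocks --- one of the form $PAP^{-1}=A^*(-\bk)$ and the other of the form $PAP^{-1}=-A^*(-\bk)$, and similarly for $D$ --- forcing $A=D=0$. With the diagonal blocks eliminated, $H^2=\mathbb{I}$ collapses to $BB^\dagger=B^\dagger B=\mathbb{I}$, so $B$ must be unitary. Setting $U\equiv B$, the off-diagonal pieces of the two symmetry relations reduce consistently to $PUP^{-1}=U^*(-\bk)$, recovering the required PHS relation.

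The two constructions are manifestly inverse to each other: the forward map sends $U$ to the Hermitian map whose $(1,2)$ block is $U$, and the reverse map extracts precisely this block and rebuilds the full matrix from it. I expect the main technical care to lie in careful bookkeeping of signs and of the $\bk\to-\bk$ involution during the block-by-block expansion of the $P_i H P_i^{-1}$ relations, so as to ensure that the sign constraints on the diagonal blocks really do force $A$ and $D$ to vanish identically rather than leaving residual freedom. The extensions anticipated in the main text --- to the classes containing TRS or CS --- should follow the same blueprint, with the symmetry operators $\theta_1$, $\theta_2$, $\Sigma$, $\Gamma$ of Secs.~\ref{sec:loops_PHS_TRS} and~\ref{sec:loops_A_AIII} substituted for $P_1$, $P_2$ at each step.
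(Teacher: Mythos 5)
Your proposal is correct and follows essentially the same route as the paper's Appendix D proof: forward direction by direct block substitution, and converse by writing $H$ in block form, killing the diagonal blocks $A,D$ via the two $P$-type relations, and extracting a unitary $B$ from $H^2=\mathbb{I}$. The only cosmetic difference is that the paper first forms the product $P_1P_2=\pm\,\mathrm{diag}(\mathbb{I},-\mathbb{I})$, whose anticommutation with $H$ forces the off-diagonal form, whereas you compare the two relations block-by-block directly --- the same sign incompatibility on the diagonal blocks in either phrasing.
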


\begin{proof}
For a given $U(\bk,t)$, such that $PU(\bk,t)P^{-1}=U^*(-\bk,t)$, let
\beq
H_U&=&\left(\begin{array}{cc}
0 & U(\bk,t)\\
U^\dagger(\bk,t) & 0
\end{array}\right)
\eeq
Then, with $P_{1}$ and $P_{2}$ as defined above, it is clear that Eqs.~\ref{eq:Hprop1}--\ref{eq:Hprop3} are satisfied. 

Conversely, for a given $H_U(\bk,t)$ that satisfies Eqs.~\ref{eq:Hprop1}--\ref{eq:Hprop2}, we note that
\beq
P_{1}P_{2}H_U(\bk,t)\left(P_{1}P_{2}\right)^{-1}&=&-H_U(\bk,t)\label{eq:Hprop4}
\eeq
with
\beq
P_{1}P_{2}&=&\left(\begin{array}{cc}
\mathbb{I} & 0 \\
0 & -\mathbb{I}
\end{array}\right)
\eeq
for Class D, where $P^2=\mathbb{I}$, and 
\beq
P_{1}P_{2}&=&\left(\begin{array}{cc}
-\mathbb{I} & 0 \\
0 & \mathbb{I}
\end{array}\right)
\eeq
for class D, where $P^2=-\mathbb{I}$. If
\beq
H_U&=&\left(\begin{array}{cc}
A & B \\
B^\dagger & D
\end{array}\right),
\eeq
then Eq.~\ref{eq:Hprop4} implies $A=D=0$, and Eq.~\ref{eq:Hprop3} implies $BB^\dagger=\mathbb{I}$, so that $B$ is unitary. We can then write 
\beq
H_U&=&\left(\begin{array}{cc}
0 & U(\bk,t) \\
U^\dagger(\bk,t) & 0
\end{array}\right),
\eeq
and from Eq.~\ref{eq:Hprop1} we see that $PU(\bk,t)P^{-1}=U^*(-\bk,t)$.
\end{proof}

\section{Additional K-Theory Details\label{app:KTheory}}
In this appendix, we give some additional details of the K-theory classification scheme outlined in the main text. For further information, we refer the reader to Refs.~\cite{Kitaev:2009vc,Karoubi:2023223,Atiyah:112984}.
\subsection{Grothendieck Group of Unitary Maps in a Symmetry Class\label{sec:group_unitaries}}
We consider the problem of classifying unitary maps on a manifold $M$ (for instance, $M=S^1\times S^1$ for a periodic unitary in 1D) in a general AZ symmetry class denoted by $\mathcal{S}$. We construct a group as follows: we take pairs $(U_1,U_2)$ and consider the operation `$+$' defined through
\beq
(U_1,U_2)+(U_3,U_4)&=& (U_1\oplus U_3,U_2\oplus U_4),\label{eq:define_addition}
\eeq
where $\oplus$ is the direct sum. We define the equivalence of pairs in the usual (stable homotopy) sense, and choose symmetry operators in such a way that a symmetry operator for the unitary $U_1\oplus U_3$ is the tensor sum of the corresponding symmetry operators for $U_1$ and $U_3$. The pairs then form an Abelian group under $+$, where the trivial element consists of the equivalence class of pairs of the form $(U,U)$. We denote this group by $K_U(\mcs,M)$.

\subsection{Categories and K-Theory for Classification of Unitaries}
In the main text we noted that the problem of classifying unitaries in symmetry class $\mcs$ is equivalent to the problem of classifying Hermitian maps (or `Hamiltonians') in some enhanced symmetry class $\mcs'$. Using the same reasoning as above, we can define an Abelian group of pairs of these Hamiltonians under the `$+$' operation, which we write as $K(\mcs',M)$.

Following Karoubi \cite{Karoubi:2023223}, for an arbitary Banach category, $\cat$, let us denote by $\mathscr{C}^{p,q}$ the category whose objects are the pairs $(E,\rho)$, where $E\in \mathrm{Ob}(\cat)$ and $\rho:C^{p,q}\to\mathrm{End}(E)$ is an $K$-algebra homomorphism, and where $K$ is $\mathbb{R}$ or $\mathbb{C}$ and $C^{p,q}$ is a real or complex Clifford algebra with $p$ negative generators and $q$ positive generators. A morphism from the pair $(E,\rho)$ to the pair $(E',\rho')$ is defined to be a $\cat$-morphism $f:E\to E'$ such that $f\cdot\rho(\lambda)=\rho(\lambda)\cdot f$ for each element $\lambda$ of $C^{p,q}$. 

To classify the Hamiltonians above, we now construct for every symmetry group $\mcs'$ two additive categories of the form $\mathscr{C}^{p,q}$ and $\mathscr{C}^{p',q'}$, where $p,q,p',q'$ all depend on $\mcs'$. Here, $\cat$ is a category which is either the category of \emph{Real} or complex vector bundles on $M$, or a closely related category (depending on $\mcs'$) \footnote{for definitions see \cite{Karoubi:2023223}}. If $\{S_i\}$ is the set of symmetry operators corresponding to $\mcs'$, then the canonical inclusion map from $\{S_i\}$ to $\{S_i,H\}$ leads to a quasi-surjective Banach functor $\phi':\cat^{p',q'}\to \cat^{p,q}$. This allows us to define a Grothendieck group $K(\phi')$ associated with this functor, such that the Grothendieck group $K(\mcs',M)$ is the same as $K(\phi')$.

The canonical inclusion map $\cat^{p,q}\subset\cat^{p,q+1}$ induces a quasi-surjective Banach functor $\phi:\cat^{p,q+1}\longrightarrow \cat^{p,q}$. When $\cat$ is the category of complex vector bundles on $M$, then the Grothendieck group $K(\phi)$ is denoted by $K^{p,q}(M)$, and when $\cat$ is the category of \emph{Real} vector bundles over the \emph{real} space $M$ \cite{Atiyah:1966hg}, then the Grothendieck group $K(\phi)$ is denoted by $KR^{p,q}(M)$. Here, the \emph{real} space $M$ corresponds to the existence of an involution which derives from $\bk\to-\bk$.

Using, repeatedly if necessary, the canonical Morita equivalences of the categories $\cat^{p,q}$ with $\cat^{p+1,q+1}$, and $\cat^{p,0}$ with $\cat^{0,p+2}$, we can establish equivalences between the categories $\cat^{p,q}$ and $\cat^{p',q'}$ for an arbitrary symmetry class $\mcs'$ and a category of the form $\tilde{\cat}^{p,q}$, where $\tilde{\cat}$ is the category of complex vector bundles over $M$ for $\mcs'\in\{\rm A,AIII\}$ and the category of \emph{Real} vector bundles over $M$ for all other symmetry classes. This then allows us to identify $K(\mcs',M)$ with some $KR^{0,q}(M)$ (with $0\leq q <8$) or some $K^{0,q}(M)$ (with $0\leq q <2$) and leads to the results in the main text. Further details will be presented elsewhere.

\end{document}